\newcommand{\bra}[1]{\langle #1|} 
\newcommand{\ket}[1]{|#1\rangle}
\newcommand{\braket}[2]{\langle #1|#2\rangle}
\newcommand{\Ora}[1]{\mathcal{O}_{#1}}
\newcommand{\OInt}{h}
\newcommand{\Norm}[1]{\left\lVert #1 \right\rVert}
\newtheorem{definition}{Definition}
\newtheorem{theorem}{Theorem}
\newtheorem{lemma}{Lemma}
\newtheorem{principle}{Principle}
\newtheorem*{search problem}{Search Problem}
\tikzstyle{none}=[inner sep=0pt]
\tikzstyle{new}=[inner sep=2pt]
\tikzstyle{env}=[copoint,regular polygon rotate=0,minimum width=0.2cm, fill=black]
\tikzstyle{probs}=[shape=semicircle,fill=white,draw=black,shape border rotate=180,minimum width=1.2cm]
\tikzstyle{wavy}=[decorate,decoration={snake, segment length=1mm, amplitude=0.1mm}]
\tikzstyle{mopoint}=[shape=semicircle, fill=white,draw=black,shape border rotate=180,scale =0.75]
\tikzstyle{mocopoint}=[shape=semicircle, fill=white,draw=black,minimum width = 0.9cm, scale =0.75, xscale=0.7]
\tikzstyle{cpoint}=[shape=semicircle, fill=white,draw=black,minimum width = 0.9cm, scale =0.75, xscale=1, yscale=0.7, shape border rotate = 90,font=\fontsize{14}{16}\selectfont]
\tikzstyle{cocpoint}=[shape=semicircle, fill=white,draw=black,minimum width = 0.9cm, scale =0.75, xscale=1, yscale=0.7, shape border rotate = 270,font=\fontsize{14}{16}\selectfont]
\tikzstyle{every picture}=[baseline=-0.25em,scale=0.5]
\tikzstyle{dotpic}=[] 
\tikzstyle{diredges}=[every to/.style={diredge}]
\tikzstyle{math matrix}=[matrix of math nodes,left delimiter=(,right delimiter=),inner sep=2pt,column sep=1em,row sep=0.5em,nodes={inner sep=0pt},text height=1.5ex, text depth=0.25ex]
\tikzstyle{inline text}=[text height=1.5ex, text depth=0.25ex,yshift=0.5mm]
\tikzstyle{label}=[font=\footnotesize,text height=1.5ex, text depth=0.25ex,yshift=0.5mm]
\tikzstyle{left label}=[label,anchor=east,xshift=1.5mm]
\tikzstyle{right label}=[label,anchor=west,xshift=-1.5mm]
\tikzstyle{braceedge}=[decorate,decoration={brace,amplitude=2mm,raise=-1mm}]
\tikzstyle{small braceedge}=[decorate,decoration={brace,amplitude=1mm,raise=-1mm}]
\tikzstyle{doubled}=[line width=2pt] 
\tikzstyle{boldedge}=[doubled,shorten <=-0.17mm,shorten >=-0.17mm]
\tikzstyle{boldedgedashed}=[very thick,dashed,shorten <=-0.17mm,shorten >=-0.17mm]
\tikzstyle{vboldedgedashed}=[doubled,dashed,shorten <=-0.17mm,shorten >=-0.17mm]
\tikzstyle{left hook arrow}=[left hook-latex]
\tikzstyle{right hook arrow}=[right hook-latex]
\tikzstyle{sembracket}=[line width=0.5pt,shorten <=-0.07mm,shorten >=-0.07mm]
\tikzstyle{causal edge}=[->,thick,gray]
\tikzstyle{causal nondir}=[thick,gray]
\tikzstyle{timeline}=[thick,gray, dashed]
\tikzstyle{cedge}=[<->,thick,gray!70!white]
\tikzstyle{empty diagram}=[draw=gray!40!white,dashed,shape=rectangle,minimum width=1cm,minimum height=1cm]
\tikzstyle{empty diagram small}=[draw=gray!50!white,dashed,shape=rectangle,minimum width=0.6cm,minimum height=0.5cm]
\tikzstyle{dot}=[inner sep=0.7mm,minimum width=0pt,minimum height=0pt,draw,shape=circle]
\tikzstyle{ddot}=[inner sep=0.7mm,doubled, minimum width=2.5mm,minimum height=2.5mm,draw,shape=circle]
\tikzstyle{black dot}=[dot,fill=black]
\tikzstyle{white dot}=[dot,fill=white]
\tikzstyle{green dot}=[white dot] 
\tikzstyle{gray dot}=[dot,fill=gray!40!white]
\tikzstyle{red dot}=[gray dot] 
\tikzstyle{black ddot}=[ddot,fill=black]
\tikzstyle{white ddot}=[ddot,fill=white]
\tikzstyle{gray ddot}=[ddot,fill=gray!40!white]
\tikzstyle{gray edge}=[gray!40!white]
\tikzstyle{small dot}=[inner sep=0.4mm,minimum width=0pt,minimum height=0pt,draw,shape=circle]
\tikzstyle{small black dot}=[small dot,fill=black]
\tikzstyle{small white dot}=[small dot,fill=white]
\tikzstyle{small gray dot}=[small dot,fill=gray!40!white]
\tikzstyle{causal dot}=[inner sep=0.4mm,minimum width=0pt,minimum height=0pt,draw=white,shape=circle,fill=gray!40!white]
\tikzstyle{white phase dot}=[dot,fill=white]
\tikzstyle{white phase ddot}=[ddot,fill=white]
\tikzstyle{gray phase dot}=[dot,fill=gray!40!white]
\tikzstyle{gray phase ddot}=[ddot,fill=gray!40!white]
\tikzstyle{grey phase dot}=[gray phase dot]
\tikzstyle{grey phase ddot}=[gray phase ddot]
\tikzstyle{cnot}=[fill=white,shape=circle,inner sep=-1.4pt]
\tikzstyle{hadamard}=[square box,inner sep=0 pt,font=\tiny\sf,minimum height=3mm,minimum width=3mm]
\tikzstyle{dhadamard}=[hadamard,doubled]
\tikzstyle{antipode}=[white dot,inner sep=0.3mm,font=\footnotesize]
\tikzstyle{scalar}=[diamond,draw,inner sep=0.5pt,font=\small]
\tikzstyle{dscalar}=[diamond,doubled, draw,inner sep=0.5pt,font=\small]
\tikzstyle{small box}=[rectangle,inline text,fill=white,draw,minimum height=5mm,yshift=-0.5mm,minimum width=5mm,font=\small]
\tikzstyle{small gray box}=[small box,fill=gray!30]
\tikzstyle{medium box}=[rectangle,inline text,fill=white,draw,minimum height=5mm,yshift=-0.5mm,minimum width=10mm,font=\small]
\tikzstyle{square box}=[small box] 
\tikzstyle{medium gray box}=[small box,fill=gray!30]
\tikzstyle{large box}=[rectangle,inline text,fill=white,draw,minimum height=5mm,yshift=-0.5mm,minimum width=15mm,font=\small]
\tikzstyle{large gray box}=[small box,fill=gray!30]
\tikzstyle{point}=[regular polygon,regular polygon sides=3,draw,scale=0.75,inner sep=-0.5pt,minimum width=9mm,fill=white,regular polygon rotate=180]
\tikzstyle{copoint}=[regular polygon,regular polygon sides=3,draw,scale=0.75,inner sep=-0.5pt,minimum width=9mm,fill=white]
\tikzstyle{dpoint}=[point,doubled]
\tikzstyle{dcopoint}=[copoint,doubled]
\tikzstyle{tinypoint}=[regular polygon,regular polygon sides=3,draw,scale=0.55,inner sep=-0.15pt,minimum width=6mm,fill=white,regular polygon rotate=180]
\tikzstyle{arrowhead}=[regular polygon,regular polygon sides=3,draw,scale=0.2,inner sep=-0.15pt,minimum width=6mm,fill=black,regular polygon rotate=180]
\tikzstyle{white point}=[point]
\tikzstyle{green point}=[white point] 
\tikzstyle{white copoint}=[copoint]
\tikzstyle{gray point}=[point,fill=gray!40!white]
\tikzstyle{gray dpoint}=[gray point,doubled]
\tikzstyle{red point}=[gray point] 
\tikzstyle{gray copoint}=[copoint,fill=gray!40!white]
\tikzstyle{gray dcopoint}=[gray copoint,doubled]
\tikzstyle{tiny gray point}=[tinypoint,fill=gray!40!white]
\tikzstyle{diredge}=[->]
\tikzstyle{rdiredge}=[<-]
\tikzstyle{thickdiredge}=[->, very thick]
\tikzstyle{pointer edge}=[->,very thick,gray]
\tikzstyle{pointer edge part}=[very thick,gray]
\tikzstyle{dashed edge}=[dashed]
\tikzstyle{thick dashed edge}=[very thick,dashed]
\tikzstyle{thick gray dashed edge}=[thick dashed edge,gray!90]
\tikzstyle{thick map edge}=[very thick,|->]
\newcommand{\boxshape}[3]{%
\pgfdeclareshape{#1}{
\inheritsavedanchors[from=rectangle] 
\inheritanchorborder[from=rectangle]
\inheritanchor[from=rectangle]{center}
\inheritanchor[from=rectangle]{north}
\inheritanchor[from=rectangle]{south}
\inheritanchor[from=rectangle]{west}
\inheritanchor[from=rectangle]{east}
\backgroundpath{
\southwest \pgf@xa=\pgf@x \pgf@ya=\pgf@y
\northeast \pgf@xb=\pgf@x \pgf@yb=\pgf@y

\@tempdima=#2
\@tempdimb=#3

\pgfpathmoveto{\pgfpoint{\pgf@xa - 5pt + \@tempdima}{\pgf@ya}}
\pgfpathlineto{\pgfpoint{\pgf@xa - 5pt - \@tempdima}{\pgf@yb}}
\pgfpathlineto{\pgfpoint{\pgf@xb + 5pt + \@tempdimb}{\pgf@yb}}
\pgfpathlineto{\pgfpoint{\pgf@xb + 5pt - \@tempdimb}{\pgf@ya}}
\pgfpathlineto{\pgfpoint{\pgf@xa - 5pt + \@tempdima}{\pgf@ya}}
\pgfpathclose
}
}}
\tikzstyle{cloud}=[shape=cloud,draw,minimum width=1.5cm,minimum height=1.5cm]
\tikzstyle{map}=[draw,shape=NEbox,inner sep=2pt,minimum height=6mm,fill=white]
\tikzstyle{mapdag}=[draw,shape=SEbox,inner sep=2pt,minimum height=6mm,fill=white]
\tikzstyle{mapadj}=[draw,shape=SEbox,inner sep=2pt,minimum height=6mm,fill=white]
\tikzstyle{maptrans}=[draw,shape=SWbox,inner sep=2pt,minimum height=6mm,fill=white]
\tikzstyle{mapconj}=[draw,shape=NWbox,inner sep=2pt,minimum height=6mm,fill=white]
\tikzstyle{dbox}=[draw,doubled,shape=rectangle,inner sep=2pt,minimum height=6mm,minimum width=6mm,fill=white]
\tikzstyle{dmap}=[draw,doubled,shape=NEbox,inner sep=2pt,minimum height=6mm,fill=white]
\tikzstyle{dmapdag}=[draw,doubled,shape=SEbox,inner sep=2pt,minimum height=6mm,fill=white]
\tikzstyle{dmapadj}=[draw,doubled,shape=SEbox,inner sep=2pt,minimum height=6mm,fill=white]
\tikzstyle{dmaptrans}=[draw,doubled,shape=SWbox,inner sep=2pt,minimum height=6mm,fill=white]
\tikzstyle{dmapconj}=[draw,doubled,shape=NWbox,inner sep=2pt,minimum height=6mm,fill=white]
\tikzstyle{ddmap}=[draw,doubled,dashed,shape=NEbox,inner sep=2pt,minimum height=6mm,fill=white]
\tikzstyle{ddmapdag}=[draw,doubled,dashed,shape=SEbox,inner sep=2pt,minimum height=6mm,fill=white]
\tikzstyle{ddmapadj}=[draw,doubled,dashed,shape=SEbox,inner sep=2pt,minimum height=6mm,fill=white]
\tikzstyle{ddmaptrans}=[draw,doubled,dashed,shape=SWbox,inner sep=2pt,minimum height=6mm,fill=white]
\tikzstyle{ddmapconj}=[draw,doubled,dashed,shape=NWbox,inner sep=2pt,minimum height=6mm,fill=white]
\tikzstyle{smap}=[draw,shape=sNEbox,fill=white]
\tikzstyle{smapdag}=[draw,shape=sSEbox,fill=white]
\tikzstyle{smapadj}=[draw,shape=sSEbox,fill=white]
\tikzstyle{smaptrans}=[draw,shape=sSWbox,fill=white]
\tikzstyle{smapconj}=[draw,shape=sNWbox,fill=white]
\tikzstyle{dsmap}=[draw,dashed,shape=sNEbox,fill=white]
\tikzstyle{dsmapdag}=[draw,dashed,shape=sSEbox,fill=white]
\tikzstyle{dsmaptrans}=[draw,dashed,shape=sSWbox,fill=white]
\tikzstyle{dsmapconj}=[draw,dashed,shape=sNWbox,fill=white]
\tikzstyle{mmap}=[draw,shape=mNEbox]
\tikzstyle{mmapdag}=[draw,shape=mSEbox]
\tikzstyle{mmaptrans}=[draw,shape=mSWbox]
\tikzstyle{mmapconj}=[draw,shape=mNWbox]
\tikzstyle{mmapgray}=[draw,fill=gray!40!white,shape=mNEbox]
\tikzstyle{smapgray}=[draw,fill=gray!40!white,shape=sNEbox]
\pgfmathsetmacro{\pgf@shorten@left}{\pgfkeysvalueof{/tikz/shorten left}}
\pgfmathsetmacro{\pgf@shorten@right}{\pgfkeysvalueof{/tikz/shorten right}}
\pgfmathsetmacro{\pgf@shorten@left}{\pgfkeysvalueof{/tikz/shorten left}}
\pgfmathsetmacro{\pgf@shorten@right}{\pgfkeysvalueof{/tikz/shorten right}}
\tikzstyle{kpoint common}=[draw,fill=white,inner sep=1pt,minimum height=4mm]
\tikzstyle{kpoint}=[shape=cornerpoint,shorten left=5pt,kpoint common]
\tikzstyle{kpoint adjoint}=[shape=cornercopoint,shorten left=5pt,kpoint common]
\tikzstyle{kpoint conjugate}=[shape=cornerpoint,shorten right=5pt,kpoint common]
\tikzstyle{kpoint transpose}=[shape=cornercopoint,shorten right=5pt,kpoint common]
\tikzstyle{kpoint symm}=[shape=cornerpoint,shorten left=5pt,shorten right=5pt,kpoint common]
\tikzstyle{kpointdag}=[kpoint adjoint]
\tikzstyle{kpointadj}=[kpoint adjoint]
\tikzstyle{kpointconj}=[kpoint conjugate]
\tikzstyle{kpointtrans}=[kpoint transpose]
\tikzstyle{dkpoint}=[kpoint,doubled]
\tikzstyle{dkpointdag}=[kpoint adjoint,doubled]
\tikzstyle{dkcopoint}=[kpoint adjoint,doubled]
\tikzstyle{dkpointadj}=[kpoint adjoint,doubled]
\tikzstyle{dkpointconj}=[kpoint conjugate,doubled]
\tikzstyle{dkpointtrans}=[kpoint transpose,doubled]
\tikzstyle{kscalar}=[kpoint common, shape=EBox, inner xsep=-1pt, inner ysep=3pt,font=\small]
\tikzstyle{kscalarconj}=[kpoint common, shape=WBox, inner xsep=-1pt, inner ysep=3pt,font=\small]
 \tikzstyle{upground}=[circuit ee IEC,thick,ground,rotate=90,scale=2.5]
 \tikzstyle{downground}=[circuit ee IEC,thick,ground,rotate=-90,scale=2.5]
 \tikzstyle{bigground}=[regular polygon,regular polygon sides=3,draw=gray,scale=0.50,inner sep=-0.5pt,minimum width=10mm,fill=gray]
\tikzstyle{arrs}=[-latex,font=\small,auto]
\tikzstyle{arrow plain}=[arrs]
\tikzstyle{arrow dashed}=[dashed,arrs]
\tikzstyle{arrow bold}=[very thick,arrs]
\tikzstyle{arrow hide}=[draw=white!0,-]
\tikzstyle{arrow reverse}=[latex-]
\tikzstyle{cdnode}=[]
\tikzstyle{slit}=[line width=2]
\tikzstyle{block}=[line width=4,black!10!red,line cap=round]
\tikzstyle{screen}=[line width=2,black!30!gray,line cap=round]
\tikzstyle{di}=[diamond,draw,inner sep=0.5pt,font=\small, minimum size = .5cm]
\tikzstyle{sbox}=[rectangle,draw]
\tikzstyle{mirror}=[line width=2,black]
\tikzstyle{trace}=[circuit ee IEC,thick,ground,rotate=0,scale=2]
\tikzstyle{traceState}=[circuit ee IEC,thick,ground,rotate=180,scale=2]
\tikzstyle{detEff}=[circuit ee IEC,thick,ground,rotate=180,scale=1.4]
\tikzstyle{maxMix}=[circuit ee IEC,thick,ground,scale=1.4]
\tikzstyle{particlePath}=[line width=2,gray!40, line cap =round]
\begin{document}

\title{Deriving Grover's lower bound from simple physical principles}
\author{Ciar{\'a}n~M. Lee}
\email{ciaran.lee@cs.ox.ac.uk}
\affiliation{University of Oxford, Department of Computer Science, Wolfson Building, Parks Road, Oxford OX1 3QD, UK.}
\author{John~H. Selby}
\email{john.selby08@ic.ac.uk}
\affiliation{University of Oxford, Department of Computer Science, Wolfson Building, Parks Road, Oxford OX1 3QD, UK.}
\affiliation{Imperial College London,  London SW7 2AZ, UK.}

\begin{abstract}
Grover's algorithm constitutes the optimal quantum solution to the search problem and provides a quadratic speed-up over all possible classical search algorithms. Quantum interference between computational paths has been posited as a key resource behind this computational speed-up. However there is a limit to this interference, at most pairs of paths can ever interact in a fundamental way. Could more interference imply more computational power? Sorkin has defined a hierarchy of possible interference behaviours---currently under experimental investigation---where classical theory is at the first level of the hierarchy and quantum theory belongs to the second. Informally, the order in the hierarchy corresponds to the number of paths that have an irreducible interaction in a multi-slit experiment. In this work, we consider how Grover's speed-up depends on the order of interference in a theory. Surprisingly, we show that the quadratic lower bound holds regardless of the order of interference. Thus, at least from the point of view of the search problem, post-quantum interference does not imply a computational speed-up over quantum theory.

\end{abstract}

\maketitle

Grover's algorithm \cite{grover1997quantum} provides the optimal quantum solution to the search problem and is one of the most versatile and influential quantum algorithms. The search problem---in its simplest form---asks one to find a single ``marked'' item from an unstructured list of $N$ elements by querying an oracle which can recognise the marked item. The importance of Grover's algorithm stems from the ubiquitous nature of the search problem and its relation to solving \textbf{NP}-complete problems \cite{bennett1997strengths}. Classical computers require $O(N)$ queries to solve this problem, but quantum computers---using Grover's algorithm---only require $O(\sqrt{N})$ queries. Quantum interference between computational paths has been posited \cite{stahlke2014quantum} as a key resource behind this computational ``speed-up''. However, as first noted by Sorkin \cite{sorkin1994quantum,sorkin1995quantum}, there is a limit to this interference---at most pairs of paths can ever interact in a fundamental way. Could more interference imply more computational power?

Sorkin has defined a hierarchy of possible interference behaviours---currently under experimental investigation \cite{sinha2008testing,park2012three,sinha2010ruling}---where classical theory is at the first level of the hierarchy and quantum theory belongs to the second. Informally, the order in the hierarchy corresponds to the number of paths that have an irreducible interaction in a multi-slit experiment. To get a greater understanding of the role of interference in computation, we consider how Grover's speed-up depends on the order of interference in a theory.

Restriction to the second level of this hierarchy implies many ``quantum-like'' features, which, at first glance, appear to be unrelated to interference. For example, such interference behaviour restricts correlations \cite{dowker2014histories} to the ``almost quantum correlations'' discussed in \cite{navascues2015almost}, and bounds contextuality in a manner similar to quantum theory \cite{henson2015bounding,niestegge2012conditional}. This, in conjunction with interference being a key resource in the quantum speed-up, suggests that post-quantum interference may allow for a speed-up over quantum computation.

Surprisingly, we show that this is not the case---at least from the point of view of the search problem. We consider this problem within the framework of generalised probabilistic theories, which is suitable for describing arbitrary operationally-defined theories \cite{Pavia1,Pavia2,Hardy2011,lee2015computation,lee2015proofs,barrett2015landscape}. Classical probability theory, quantum theory, Spekken's toy model \cite{janotta2013generalized,spekkens2007evidence}, and the theory of PR boxes \cite{popescu1998causality} all provide examples of theories in this framework. We consider theories satisfying certain natural physical principles which are sufficient for the existence of a well-defined search oracle. Given these physical principles, we prove that a theory at level $\OInt$ in Sorkin's hierarchy requires $\Omega(\sqrt{N/\OInt})$ queries to solve the search problem. 
Thus, post-quantum interference does not imply a computational speed-up over quantum theory. Moreover, from the point of view of the search problem, all (finite) orders of interference are asymptotically equivalent.

\section{Generalised probabilistic theories}
A basic requirement of any physical theory is that it should provide a consistent account of experimental data. This idea underlies the framework of generalised probabilistic theories---developed in \cite{Pavia1, Pavia2, Hardy2011, barrett2007information}---which allows for the description of arbitrary theories satisfying this requirement. Informally, a theory in this framework specifies a set of \emph{physical processes} which can be connected together to form experiments. Each process corresponds to a single use of a piece of laboratory apparatus, each having a number of input and output ports, as well as a classical pointer. When the physical apparatus is used in an experiment, the classical pointer comes to rest at one of a number of positions, indicating a specific outcome has occurred.
Each port is associated with a \emph{physical system} of a particular type (labelled $A,B,...$). Intuitively one can consider these physical systems as passing from outputs of one process to inputs of another. Processes can thus be connected together---both in sequence and in parallel---to form \emph{circuits}, where it is required that types match and there are no cycles.
\[\begin{tikzpicture}
	\begin{pgfonlayer}{nodelayer}
		\node [style=none] (0) at (0.4999999, 0.7499999) {};
		\node [style=none] (1) at (0.9999999, 0.4999999) {};
		\node [style=none] (2) at (0.9999999, 0.25) {};
		\node [style=none] (3) at (0.4999999, -0) {};
		\node [style=none] (4) at (1, 2) {};
		\node [style=none] (5) at (4, 2) {};
		\node [style=none] (6) at (4, -1) {};
		\node [style=none] (7) at (1, -1) {};
		\node [style=none] (8) at (2, -0.25) {};
		\node [style=none] (9) at (3, -0.25) {};
		\node [style=none] (10) at (1.5, 0.75) {};
		\node [style=none] (11) at (3.5, 0.75) {};
		\node [style=none] (12) at (2.75, -0.25) {};
		\node [style=none] (13) at (3.25, 0.7500002) {};
		\node [style=none] (14) at (1.75, 1.25) {$1$};
		\node [style=none] (15) at (2.5, 1.25) {$2$};
		\node [style=none] (16) at (3.25, 1.25) {$3$};
		\node [style=none] (17) at (4.5, -0.4999999) {};
		\node [style=none] (18) at (4, -0.25) {};
		\node [style=none] (19) at (4.5, 0.25) {};
		\node [style=none] (20) at (4, -0) {};
		\node [style=none] (21) at (8.75, 2) {};
		\node [style=none] (22) at (9.75, 2.5) {$2$};
		\node [style=none] (23) at (7.25, 2.75) {};
		\node [style=none] (24) at (7.75, 3.25) {};
		\node [style=none] (25) at (7.75, 2.5) {};
		\node [style=none] (26) at (10.75, 0.25) {};
		\node [style=none] (27) at (7.75, 0.9999999) {};
		\node [style=none] (28) at (9.25, 1) {};
		\node [style=none] (29) at (7.75, 0.7499999) {};
		\node [style=none] (30) at (7.75, 2.25) {};
		\node [style=none] (31) at (10.25, 2) {};
		\node [style=none] (32) at (7.25, 2) {};
		\node [style=none] (33) at (7.25, 0.4999999) {};
		\node [style=none] (34) at (9.75, 0.9999999) {};
		\node [style=none] (35) at (8.25, 2) {};
		\node [style=none] (36) at (8.75, 2.5) {$1$};
		\node [style=none] (37) at (7.75, 0.25) {};
		\node [style=none] (38) at (10.75, 3.25) {};
		\node [style=none] (39) at (7.25, 1.25) {};
		\node [style=none] (40) at (8.75, 0.9999999) {};
		\node [style=none] (41) at (-3.5, 4) {};
		\node [style=none] (42) at (-5, 5.25) {};
		\node [style=none] (43) at (-2, 2.25) {};
		\node [style=none] (44) at (-1.5, 4) {};
		\node [style=none] (45) at (-3.5, 3) {};
		\node [style=none] (46) at (-2, 4.5) {};
		\node [style=none] (47) at (-1.5, 4.75) {};
		\node [style=none] (48) at (-2.5, 4) {};
		\node [style=none] (49) at (-2, 4.25) {};
		\node [style=none] (50) at (-3, 3) {};
		\node [style=none] (51) at (-4.5, 4) {};
		\node [style=none] (52) at (-3.5, 4.5) {$1$};
		\node [style=none] (53) at (-5, 2.25) {};
		\node [style=none] (54) at (-2, 5.25) {};
		\node [style=none] (55) at (-4, 3) {};
		\node [style=none] (56) at (4, 1.25) {};
		\node [style=none] (57) at (4, 1.5) {};
		\node [style=none] (58) at (4.5, 1.75) {};
		\node [style=none] (59) at (4.5, 0.9999999) {};
		\node [style=none] (60) at (-2, 3) {};
		\node [style=none] (61) at (-2, 3.25) {};
		\node [style=none] (62) at (-1.5, 3.5) {};
		\node [style=none] (63) at (-1.5, 2.75) {};
		\node [style=none] (64) at (-1.5, 4.375) {};
		\node [style=none] (65) at (-1.5, 3.125) {};
		\node [style=arrowhead,rotate=28] (66) at (-0.75, 1.5) {};
		\node [style=none] (67) at (4.5, 1.375) {};
		\node [style=none] (68) at (4.5, -0.125) {};
		\node [style=arrowhead,rotate=195] (69) at (3, 3.25) {};
		\node [style=arrowhead,rotate=180] (70) at (6, 1) {};
		\node [style=none] (71) at (-0.9999999, 4.75) {$A$};
		\node [style=none] (72) at (6.75, 3) {$A$};
		\node [style=none] (73) at (-0.9999999, 3.5) {$B$};
		\node [style=none] (74) at (0, -0) {$B$};
		\node [style=none] (75) at (5, 1.75) {$A$};
		\node [style=none] (76) at (6.75, 0.5000001) {$A$};
		\node [style=none] (77) at (5, -0) {$C$};
		\node [style=none, color=gray] (78) at (3.5, 5.25) {Physical system};
		\node [style=none, color=gray] (79) at (9.000001, 4.75) {Connected ports};
		\node [style=none] (80) at (0.9999998, 5.25) {};
		\node [style=none] (81) at (-0.4999998, 4.75) {};
		\node [style=none] (82) at (5, 3.25) {};
		\node [style=none] (83) at (7.5, 4.5) {};
		\node [style=none] (84) at (3.25, -0.2500001) {};
		\node [style=none] (85) at (6, -0.9999999) {};
		\node [style=none] (86) at (-1.5, 2.5) {};
		\node [style=none] (87) at (0, 0.5000001) {};
		\node [style=none] (88) at (-3.5, 1.25) {};
		\node [style=none] (89) at (-1.5, -0.4999999) {};
		\node [style=none, color=gray] (90) at (8.75, -0.9999999) {Classical pointer};
		\node [style=none, color=gray] (91) at (-3.25, 0.9999998) {Output port};
		\node [style=none, color=gray] (92) at (-2.75, -0.2500001) {Input port};
		\node [style=none] (93) at (0.5000002, 0.2875) {};
		\node [style=none] (94) at (7.25, 2.375) {};
		\node [style=none] (95) at (7.25, 0.875) {};
	\end{pgfonlayer}
	\begin{pgfonlayer}{edgelayer}
		\draw (0.center) to (1.center);
		\draw (3.center) to (2.center);
		\draw (4.center) to (7.center);
		\draw (7.center) to (6.center);
		\draw (6.center) to (5.center);
		\draw (5.center) to (4.center);
		\draw (10.center) to (8.center);
		\draw (9.center) to (11.center);
		\draw [bend left=15, looseness=1.00] (8.center) to (9.center);
		\draw [bend left=15, looseness=1.00] (10.center) to (11.center);
		\draw [style=diredge] (12.center) to (13.center);
		\draw (19.center) to (20.center);
		\draw (17.center) to (18.center);
		\draw [style=none] (0.center) to (3.center);
		\draw [style=none] (19.center) to (17.center);
		\draw (23.center) to (25.center);
		\draw (32.center) to (30.center);
		\draw (24.center) to (37.center);
		\draw (37.center) to (26.center);
		\draw (26.center) to (38.center);
		\draw (38.center) to (24.center);
		\draw (35.center) to (40.center);
		\draw (34.center) to (31.center);
		\draw [bend left=15, looseness=1.00] (40.center) to (34.center);
		\draw [bend left=15, looseness=1.00] (35.center) to (31.center);
		\draw [style=diredge] (28.center) to (21.center);
		\draw (39.center) to (27.center);
		\draw (33.center) to (29.center);
		\draw [style=none] (23.center) to (32.center);
		\draw [style=none] (39.center) to (33.center);
		\draw (42.center) to (53.center);
		\draw (53.center) to (43.center);
		\draw (43.center) to (54.center);
		\draw (54.center) to (42.center);
		\draw (51.center) to (55.center);
		\draw (50.center) to (48.center);
		\draw [bend left=15, looseness=1.00] (55.center) to (50.center);
		\draw [bend left=15, looseness=1.00] (51.center) to (48.center);
		\draw [style=diredge] (45.center) to (41.center);
		\draw (47.center) to (46.center);
		\draw (44.center) to (49.center);
		\draw [style=none] (47.center) to (44.center);
		\draw (58.center) to (57.center);
		\draw (59.center) to (56.center);
		\draw [style=none] (58.center) to (59.center);
		\draw (62.center) to (61.center);
		\draw (63.center) to (60.center);
		\draw [style=none] (62.center) to (63.center);
		\draw [color=black, in=165, out=0, looseness=1.00] (64.center) to (69.center);
		\draw [color=black, in=120, out=-15, looseness=0.75] (65.center) to (66.center);
		\draw [color=black, in=150, out=0, looseness=1.00] (67.center) to (70.center);
		\draw [style=diredge, color=gray, in=-15, out=165, looseness=1.75] (80.center) to (81.center);
		\draw [style=diredge, color=gray, in=75, out=-105, looseness=0.75] (83.center) to (82.center);
		\draw [style=diredge, color=gray, in=-30, out=150, looseness=2.25] (85.center) to (84.center);
		\draw [style=diredge, color=gray, in=135, out=-45, looseness=1.50] (89.center) to (87.center);
		\draw [style=diredge, color=gray, in=-90, out=90, looseness=1.50] (88.center) to (86.center);
		\draw [color=black, in=-175, out=-15, looseness=1.00] (69.center) to (94.center);
		\draw [color=black, bend right=15, looseness=1.00] (66.center) to (93.center);
		\draw [color=black, in=-165, out=-30, looseness=1.00] (70.center) to (95.center);
	\end{pgfonlayer}
\end{tikzpicture} \]
Closed circuits (i.e. circuits with no disconnected ports) correspond to the probability of obtaining a particular set of outcomes from the experiment represented by that circuit. Processes that yield the same probabilities in all closed circuits are identified, giving rise to equivalence classes of processes. Each element of such an equivalence class has the same input and output ports, and are denoted ${}_AT_B\in{}_A\mathcal{T}_B$, where ${}_A\mathcal{T}_B$ is the set of possible \emph{transformations} from systems $A$ to $B$. Transformations with no input ports are called \emph{states} $S_A \in \mathcal{S}_A$, and no output ports, \emph{effects}, ${}_AE\in{}_A\mathcal{E}$. 

Given the probabilistic structure provided by closed circuits, each transformation ${}_AT_B$ can be associated with a real vector such that the set ${}_A\mathcal{T}_B$ is a subset of some real vector space, denoted ${}_AV_B$ \cite{Pavia1}. We assume in this work that all vector spaces are finite dimensional. It can be shown that transformations and effects act linearly on the vector space of states, $V_A$ \cite{Pavia1}. A measurement corresponds to a set of effects $\{e^r\}$ labelled by the position of the classical pointer $r$. The probability of preparing state $s$ and observing outcome $r$ is (suppressing system types for readability) given by: $$e^r(s) = P(r,s).$$

A state is \emph{pure} if it does not arise as a \emph{coarse-graining} of other states \footnote{The process $\{\mathcal{U}_j\}_{j\in{Y}}$, where $j$ indexes the classical pointer, is a coarse-graining of $\{\mathcal{E}_i\}_{i\in{X}}$ if there is a disjoint partition $\{X_j\}_{j\in{Y}}$ of $X$ such that $\mathcal{U}_j=\sum_{i\in{X_j}}\mathcal{E}_i$.}; a pure state is one for which we have maximal information. A state is \emph{mixed} if it is not pure. Similarly, one says a transformation is pure if it does not arise as a coarse-graining of other transformations. It can be shown that reversible transformations preserve pure states \cite{Pavia2}.

We now introduce five physical principles which will be assumed throughout the rest of this work. These can be though of as an abstraction of basic characteristics of the behaviour of information in quantum theory. Note however that these principles are not unique to quantum theory, indeed, real vector space quantum theory, fermionic quantum theory and the classical theory of pure states each satisfy all of these principles.
 
 \begin{principle} \textbf{Causality \cite{Pavia1}:}
There exists a unique deterministic effect ${}_AU$ for every system $A$, such that $\sum_r e^r=U $ for all measurements, $\{e^r\}_r$.
\end{principle}

In quantum theory the unique deterministic effect is provided by the partial trace. Mathematically, causality is equivalent to the statement: ``probabilities of present experiments are independent of future measurement choice'' \cite{Pavia1}, and so this can be interpreted as saying that ``information propagates from present to future''.
  
The deterministic effect allows one to define a notion of \emph{marginalisation} for multipartite states.

\begin{principle}\textbf{Purification \cite{Pavia1}:}
Given a state $s_{A}$ there exists a system $B$ and a pure state $S_{AB}$ on $AB$ such that $s_{A}$ is the marginalisation of $S_{AB}$: $${}_BU(S_{AB})=s_A.$$ Moreover, the purification $S_{AB}$ is unique up to reversible transformations on the purifying system, $B$ \footnote{Two states $S_{AB}$ and $S'_{AB}$ purifying $s_A$ satisfy $S_{AB}={\mathbb{I}\otimes_BT_B}(S'_{AB})$, with $_BT_B$ a reversible transformation.}.
\end{principle}

For example, in quantum theory any mixed state $\rho=\sum_i p_i \ket{\psi_i}\bra{\psi_i}$ can be written as $\rho= tr_B(\ket{\Psi}\bra{\Psi}_{AB})$ where $\ket{\Psi}_{AB}:=\sum_i\sqrt{p_i}\ket{\psi_i}\ket{i}$. Moreover, any other purification $\ket{\widetilde{\Psi}}_{AB}$ must satisfy $\ket{\Psi}_{AB}=\left(\mathbb{I}_A\otimes U_B\right)\ket{\widetilde{\Psi}}_{AB}$ with $U_B$ a unitary transformation. More generally this can be thought of as saying that information cannot be fundamentally destroyed, only discarded.

\begin{principle}\textbf{Purity Preservation \cite{chiribella2015operational}:}
The composite of pure transformations is pure.
\end{principle}

Pure transformations in quantum theory can be characterised by having Kraus rank $1$. Given two such transformations, their sequential or parallel composition will each also be rank $1$, and so composition preserves purity.

\begin{principle}\textbf{Pure Sharp Effect \cite{chiribella2015operational}:}
For each system $A$ there exists a pure effect that occurs with unit probability on some state.
\end{principle}

Pure states $\{a^i\}_{i=1}^n$ are \emph{perfectly distinguishable} if there exists a measurement, corresponding to effects $\{e^j\}_{j=1}^n$, such that $e^j(a^i)=\delta_{ij}$ for all $i,j$. For example, in quantum theory the computational basis $\{\ket{i}\}$ provide a perfectly distinguishable set, where the corresponding effects are just $\{\bra{j}\}$ such that $\braket{j}{i}=\delta_{ij}$. Such an $n$-tuple of states can reliably encode an $n$-level classical system.

\begin{principle}\textbf{Strong symmetry \cite{barnum2014higher}:}
For any two $n$-tuples of pure and perfectly distinguishable states $\{a^i\}$, and $\{b^i\},$ there exists a reversible transformation $T$ such that $T(a^i)=b^i$ for all $i$.
\end{principle}

An example in quantum theory is the Hadamard transformation reversibly mapping between the bases $\{\ket{0},\ket{1}\}$ and $\{\ket{+},\ket{-}\}$.

These last two principles imply that one can encode classical data in a system, and moreover, that any encoding is equivalent. In other words, information is independent of the encoding medium.

Principles $1$ to $4$ imply the following result (see \cite{chiribella2015operational} for a proof): for any given state $s$, there exists a natural number $n$ and a set of pure and perfectly distinguishable states $\{a^i\}_{i=1}^n$ such that $s=\sum_i p_ia_i$ where $0 \leq p_i \leq 1,\ \forall i$ and $\sum_i p_i=1$.

This result, together with principle $5$, implies the existence of a ``self-dualising'' \cite{muller2012structure,barnum2014higher} inner product $\langle {\cdot,\cdot}\rangle$. That is, to every pure state $s$, there is associated a unique pure effect $e^s$,
satisfying $e^s(s)=1$, such that: $e^s(\cdot)=\langle s, \cdot\rangle $. This inner product is invariant under all reversible transformations; satisfies $0 \leq \langle r, s \rangle \leq 1$ for all states $r, s$; $\langle s, s\rangle=1$ for all pure states $s$; and $\langle s, r \rangle=0$ if $s$ and $r$ are perfectly distinguishable. It also gives rise to the norm $\Vert \cdot \Vert =\sqrt{\langle{\cdot,\cdot}\rangle}$, satisfying $\Vert s \Vert \leq 1$ for all states $s$, with equality for pure states. We will make use of this norm in proving our main result.

\section{Higher-order interference} 

Informally, a theory is said to have $n$th order interference if one can generate interference patterns in an $n$-slit experiment which cannot be created in any experiment with only $m$-slits, for all $m<n$.
\[\begin{tikzpicture}[scale=0.75]
	\begin{pgfonlayer}{nodelayer}
		\node [style=none] (0) at (0, 1.75) {};
		\node [style=none] (1) at (0, 2.75) {};
		\node [style=none] (2) at (0, 1.25) {};
		\node [style=none] (3) at (0, 0.2500002) {};
		\node [style=none] (4) at (0, -1.75) {};
		\node [style=none] (5) at (0, -2.75) {};
		\node [style=none] (6) at (0, -0.2500002) {};
		\node [style=none] (7) at (0, -1.25) {};
		\node [style=none] (8) at (-3.25, -1) {};
		\node [style=none] (9) at (0, 1.5) {};
		\node [style=none] (10) at (0, -0) {};
		\node [style=none] (11) at (3.999999, -1) {};
		\node [style=none] (12) at (0, -1.5) {};
		\node [style=none] (13) at (0, -3.25) {};
		\node [style=none] (14) at (0, -4.25) {};
		\node [style=none] (15) at (0, -2.25) {};
		\node [style=none] (16) at (0, -3.75) {};
		\node [style=none] (17) at (0, -3.000001) {};
		\node [style=none] (18) at (3.999999, 2.75) {};
		\node [style=none] (19) at (3.999999, -4.25) {};
		\node [style=none] (20) at (-3.5, -0.7500001) {};
		\node [style=none] (21) at (-5.25, 0.5000001) {};
		\node [style=none] (22) at (-0.5000001, 2) {};
		\node [style=none] (23) at (-2.75, 2.5) {};
		\node [style=none] (24) at (-1.5, -2.25) {};
		\node [style=none] (25) at (-3, -2.5) {};
		\node [style=none] (26) at (0.4999992, -2.749999) {};
		\node [style=none] (27) at (2.25, -4.5) {};
		\node [style=none] (28) at (4.499999, -2.25) {};
		\node [style=none] (29) at (7.500001, -3.25) {};
		\node [style=none] (30) at (3.5, 1.000001) {};
		\node [style=none] (31) at (4, 3.25) {};
		\node [style=none, color=gray] (32) at (-3.5, 3) {Multiple slits};
		\node [style=none, color=gray] (33) at (-6, 1.000001) {Source};
		\node [style=none, color=gray] (34) at (-3.75, -3) {Paths};
		\node [style=none, color=gray] (35) at (2.25, -5) {Block};
		\node [style=none, color=gray] (36) at (7.5, -3.75) {Screen};
		\node [style=none, color=gray] (37) at (8.25, 3.25) {Interference pattern};
		\node [style=none] (38) at (6.75, 3) {};
		\node [style=none] (39) at (3.749999, 2.5) {};
		\node [style=none] (40) at (3.5, 2) {};
		\node [style=none] (41) at (3.749999, 1.000001) {};
		\node [style=none] (42) at (2.75, -0) {};
		\node [style=none] (43) at (3.75, -0.9999999) {};
		\node [style=none] (44) at (3.5, -1.749999) {};
		\node [style=none] (45) at (3.749999, -2.5) {};
		\node [style=none] (46) at (3.5, -3.5) {};
		\node [style=none] (47) at (3.75, -4) {};
	\end{pgfonlayer}
	\begin{pgfonlayer}{edgelayer}
		\draw [thick, color=blue, in=90, out=-90, looseness=1.00] (39.center) to (40.center);
		\draw [thick, color=blue, in=75, out=-90, looseness=1.00] (40.center) to (41.center);
		\draw [thick, color=blue, in=90, out=-105, looseness=1.00] (41.center) to (42.center);
		\draw [thick, color=blue, in=90, out=-90, looseness=1.00] (42.center) to (43.center);
		\draw [thick, color=blue, in=90, out=-90, looseness=1.00] (43.center) to (44.center);
		\draw [thick, color=blue, in=90, out=-90, looseness=0.75] (44.center) to (45.center);
		\draw [thick, color=blue, in=90, out=-90, looseness=1.00] (45.center) to (46.center);
		\draw [thick, color=blue, in=117, out=-90, looseness=1.00] (46.center) to (47.center);
		\draw [style=block] (15.center) to (16.center);
		\draw [style=slit] (1.center) to (0.center);
		\draw [style=slit] (2.center) to (3.center);
		\draw [style=slit] (6.center) to (7.center);
		\draw [style=slit] (4.center) to (5.center);
		\draw [style=none, color=black] (8.center) to (9.center);
		\draw [style=none, color=black] (8.center) to (10.center);
		\draw [style=none, color=black] (10.center) to (11.center);
		\draw [style=none, color=black] (9.center) to (11.center);
		\draw [style=none, color=black] (8.center) to (12.center);
		\draw [style=none, color=black] (12.center) to (11.center);
		\draw [style=slit] (13.center) to (14.center);
		\draw [style=none, color=black] (8.center) to (5.center);
		\draw [style=screen] (18.center) to (19.center);
		\draw [style=diredge, color=gray, in=120, out=-60, looseness=3.75] (21.center) to (20.center);
		\draw [style=diredge, color=gray, in=150, out=-30, looseness=2.75] (23.center) to (22.center);
		\draw [style=diredge, color=gray, in=-150, out=45, looseness=2.00] (25.center) to (24.center);
		\draw [style=diredge, color=gray, in=-30, out=150, looseness=3.00] (27.center) to (26.center);
		\draw [style=diredge, color=gray, in=-45, out=135, looseness=3.00] (29.center) to (28.center);
		\draw [style=diredge, color=gray, in=135, out=165, looseness=1.75] (31.center) to (30.center);
	\end{pgfonlayer}
\end{tikzpicture} \]
More precisely, this means that the interference pattern created on the screen cannot be written as a particular linear combination of the patterns generated when different subsets of slits are blocked. In the two slit experiment, quantum interference corresponds to the fact that the interference pattern cannot be written as the sum of the single slit patterns:
 \[\begin{tikzpicture}[scale=0.50]
	\begin{pgfonlayer}{nodelayer}
		\node [style=none] (0) at (0, 1.25) {};
		\node [style=none] (1) at (0, 1.25) {};
		\node [style=none] (2) at (0, -1.25) {};
		\node [style=none] (3) at (0, -1.75) {};
		\node [style=none] (4) at (0, -3) {};
		\node [style=none] (5) at (0, 1.75) {};
		\node [style=none] (6) at (0, 3) {};
		\node [style=none] (7) at (2, -0) {$\neq$};
		\node [style=none] (8) at (4, 1.25) {};
		\node [style=none] (9) at (4, 1.25) {};
		\node [style=none] (10) at (4, -1.25) {};
		\node [style=none] (11) at (4, -1.75) {};
		\node [style=none] (12) at (4, -3) {};
		\node [style=none] (13) at (4, 1.75) {};
		\node [style=none] (14) at (4, 3) {};
		\node [style=none] (15) at (4, 2) {};
		\node [style=none] (16) at (4, 1) {};
		\node [style=none] (17) at (6, -0) {$+$};
		\node [style=none] (18) at (8, 1.25) {};
		\node [style=none] (19) at (8, 1.25) {};
		\node [style=none] (20) at (8, -1.25) {};
		\node [style=none] (21) at (8, -1.75) {};
		\node [style=none] (22) at (8, -3) {};
		\node [style=none] (23) at (8, 1.75) {};
		\node [style=none] (24) at (8, 3) {};
		\node [style=none] (25) at (8, -2) {};
		\node [style=none] (26) at (8, -0.9999998) {};
	\end{pgfonlayer}
	\begin{pgfonlayer}{edgelayer}
		\draw [style=block] (15.center) to (16.center);
		\draw [style=block] (25.center) to (26.center);
		\draw [style=slit] (1.center) to (2.center);
		\draw [style=slit] (6.center) to (5.center);
		\draw [style=slit] (3.center) to (4.center);
		\draw [style=slit] (9.center) to (10.center);
		\draw [style=slit] (14.center) to (13.center);
		\draw [style=slit] (11.center) to (12.center);
		\draw [style=slit] (19.center) to (20.center);
		\draw [style=slit] (24.center) to (23.center);
		\draw [style=slit] (21.center) to (22.center);
	\end{pgfonlayer}
\end{tikzpicture} \]
It was first shown by Sorkin \cite{sorkin1994quantum,sorkin1995quantum} that---at least for ideal experiments \cite{sinha2015superposition}---quantum theory is limited to the $n=2$ case. That is, the interference pattern created in a three---or more---slit experiment \emph{can} be written in terms of the two and one slit interference patterns obtained by blocking some of the slits. 
 Schematically:
\[\begin{tikzpicture}[scale=0.50]
	\begin{pgfonlayer}{nodelayer}
		\node [style=none] (0) at (0, -0.25) {};
		\node [style=none] (1) at (0, 0.25) {};
		\node [style=none] (2) at (0, 1.25) {};
		\node [style=none] (3) at (0, -1.25) {};
		\node [style=none] (4) at (0, -1.75) {};
		\node [style=none] (5) at (0, 1.75) {};
		\node [style=none] (6) at (0, 3) {};
		\node [style=none] (7) at (0, -3) {};
		\node [style=none] (8) at (4, -1.25) {};
		\node [style=none] (9) at (4, -1.75) {};
		\node [style=none] (10) at (4, -0.25) {};
		\node [style=none] (11) at (4, -3) {};
		\node [style=none] (12) at (4, 1.75) {};
		\node [style=none] (13) at (4, 3) {};
		\node [style=none] (14) at (4, 1.25) {};
		\node [style=none] (15) at (4, 0.25) {};
		\node [style=none] (16) at (12, -3) {};
		\node [style=none] (17) at (8, 1.75) {};
		\node [style=none] (18) at (8, -3) {};
		\node [style=none] (19) at (12, 1.25) {};
		\node [style=none] (20) at (12, -0.25) {};
		\node [style=none] (21) at (8, 3) {};
		\node [style=none] (22) at (8, -0.25) {};
		\node [style=none] (23) at (12, -1.75) {};
		\node [style=none] (24) at (8, 1.25) {};
		\node [style=none] (25) at (12, 3) {};
		\node [style=none] (26) at (8, 0.25) {};
		\node [style=none] (27) at (12, 0.25) {};
		\node [style=none] (28) at (8, -1.75) {};
		\node [style=none] (29) at (8, -1.25) {};
		\node [style=none] (30) at (12, -1.25) {};
		\node [style=none] (31) at (12, 1.75) {};
		\node [style=none] (32) at (20, -1.75) {};
		\node [style=none] (33) at (24, 3) {};
		\node [style=none] (34) at (24, -1.25) {};
		\node [style=none] (35) at (16, -3) {};
		\node [style=none] (36) at (20, -0.2500001) {};
		\node [style=none] (37) at (20, -1.25) {};
		\node [style=none] (38) at (20, 3) {};
		\node [style=none] (39) at (24, 0.2500001) {};
		\node [style=none] (40) at (16, -1.75) {};
		\node [style=none] (41) at (16, -1.25) {};
		\node [style=none] (42) at (24, -0.2500001) {};
		\node [style=none] (43) at (24, 1.75) {};
		\node [style=none] (44) at (16, -0.2500001) {};
		\node [style=none] (45) at (16, 0.2500001) {};
		\node [style=none] (46) at (20, 1.75) {};
		\node [style=none] (47) at (20, 0.2500001) {};
		\node [style=none] (48) at (20, 1.25) {};
		\node [style=none] (49) at (20, -3) {};
		\node [style=none] (50) at (24, -1.75) {};
		\node [style=none] (51) at (24, -3) {};
		\node [style=none] (52) at (24, 1.25) {};
		\node [style=none] (53) at (16, 1.25) {};
		\node [style=none] (54) at (16, 3) {};
		\node [style=none] (55) at (16, 1.75) {};
		\node [style=none] (56) at (2, -0) {$=$};
		\node [style=none] (57) at (6, -0) {$+$};
		\node [style=none] (58) at (10, -0) { $+$};
		\node [style=none] (59) at (14, -0) { $-$};
		\node [style=none] (60) at (18, -0) { $-$};
		\node [style=none] (61) at (22, -0) { $-$};
		\node [style=none] (62) at (4, 1) {};
		\node [style=none] (63) at (4, 2) {};
		\node [style=none] (64) at (8, 0.4999999) {};
		\node [style=none] (65) at (8, -0.4999999) {};
		\node [style=none] (66) at (12, -0.9999998) {};
		\node [style=none] (67) at (12, -2) {};
		\node [style=none] (68) at (16, 2) {};
		\node [style=none] (69) at (16, 1) {};
		\node [style=none] (70) at (20, 2) {};
		\node [style=none] (71) at (20, 1) {};
		\node [style=none] (72) at (20, -0.9999998) {};
		\node [style=none] (73) at (20, -2) {};
		\node [style=none] (74) at (24, -2) {};
		\node [style=none] (75) at (24, -0.7500001) {};
		\node [style=none] (76) at (24, -0.9999998) {};
		\node [style=none] (77) at (16, -0.4999999) {};
		\node [style=none] (78) at (16, 0.4999999) {};
		\node [style=none] (79) at (24, -0.4999999) {};
		\node [style=none] (80) at (24, 0.4999999) {};
	\end{pgfonlayer}
	\begin{pgfonlayer}{edgelayer}
		\draw [style=block] (62.center) to (63.center);
		\draw [style=block] (65.center) to (64.center);
		\draw [style=block] (67.center) to (66.center);
		\draw [style=block] (69.center) to (68.center);
		\draw [style=block] (71.center) to (70.center);
		\draw [style=block] (73.center) to (72.center);
		\draw [style=block] (74.center) to (76.center);
		\draw [style=block] (77.center) to (78.center);
		\draw [style=block] (79.center) to (80.center);
		\draw [style=slit] (5.center) to (6.center);
		\draw [style=slit] (1.center) to (2.center);
		\draw [style=slit] (3.center) to (0.center);
		\draw [style=slit] (7.center) to (4.center);
		\draw [style=slit] (12.center) to (13.center);
		\draw [style=slit] (15.center) to (14.center);
		\draw [style=slit] (8.center) to (10.center);
		\draw [style=slit] (11.center) to (9.center);
		\draw [style=slit] (17.center) to (21.center);
		\draw [style=slit] (26.center) to (24.center);
		\draw [style=slit] (29.center) to (22.center);
		\draw [style=slit] (18.center) to (28.center);
		\draw [style=slit] (31.center) to (25.center);
		\draw [style=slit] (27.center) to (19.center);
		\draw [style=slit] (30.center) to (20.center);
		\draw [style=slit] (16.center) to (23.center);
		\draw [style=slit] (55.center) to (54.center);
		\draw [style=slit] (45.center) to (53.center);
		\draw [style=slit] (41.center) to (44.center);
		\draw [style=slit] (35.center) to (40.center);
		\draw [style=slit] (46.center) to (38.center);
		\draw [style=slit] (47.center) to (48.center);
		\draw [style=slit] (37.center) to (36.center);
		\draw [style=slit] (49.center) to (32.center);
		\draw [style=slit] (43.center) to (33.center);
		\draw [style=slit] (39.center) to (52.center);
		\draw [style=slit] (34.center) to (42.center);
		\draw [style=slit] (51.center) to (50.center);
	\end{pgfonlayer}
\end{tikzpicture} \]
If a theory does not have $n$th order interference then one can show it will not have $m$th order interference, for any $m>n$ \citep{sorkin1994quantum}. As such, one can classify theories according to their maximal order of interference, $\OInt$. For example quantum theory lies at $\OInt=2$ and classical theory at $\OInt=1$.

Higher order interference was initially formalised by Sorkin in the framework of Quantum Measure Theory \cite{sorkin1994quantum} but has more recently been adapted to the setting of generalised probabilistic theories in \cite{barnum2014higher,lee2015generalised,ududec2011three,lee2015higher}. The most direct translation to this setting describes the order of interference in terms of probability distributions corresponding to the different experimental setups (which slits are open, etc.) \cite{lee2015generalised}. However, given our five principles, it is possible to define physical transformations that correspond to the action of blocking certain subsets of slits. In this case, there is a more convenient (and equivalent, given the five princples) definition in terms of such transformations \cite{barnum2014higher}.

If there are $N$ slits, labelled $1, \dots, N$, these transformations are denoted $P_I$, where $I \subseteq \{1, \dots, N\}:=\mathbf{N}$ corresponds to the subset of slits which are not blocked. In general we expect that $P_I P_J = P_{I\cap J}$, as only those slits belonging to both $I$ and $J$ will not be blocked by either $P_I$ or $P_J$. This intuition suggests that these transformations should correspond to projectors (i.e. idempotent transformations $P_IP_I=P_I$). Given principles $1$ to $5$, it was shown in \cite{barnum2014higher} that this is indeed the case. Given this structure, one can define the maximal order of interference as follows \cite{barnum2014higher}.

\begin{definition}
A theory satisfying principles $1$ to $5$ has maximal order of interference $\OInt$ if, for any $N \geq \OInt$, one has:
\[\mathds{1}_N = \sum_{{\small \begin{array}{c} I\subseteq \mathbf{N} \\ |I|\leq \OInt \end{array}}}\mathcal{C}\left(\OInt,|I|,N\right)P_I\]
where $\mathds{1}_N $ is the identity on a system with $N$ pure and perfectly distinguishable states and
\[\mathcal{C}\left(\OInt,|I|,N\right):=(-1)^{\OInt-|I|}\left(\begin{array}{c}N-|I|-1\\\OInt-|I|\end{array}\right)\]
\end{definition}

The factor $\mathcal{C}\left(\OInt,|I|,N\right)$ in the above definition corrects for the overlaps that occur when different combinations of slits are blocked. 
Note that, for the case $\OInt=N$, this reduces to the expected expression of $\mathds{1}_h=P_{\{1,...,\OInt\}}$ i.e. the identity is given by the projector with all slits open. The case of $N=\OInt+1$ corresponds to $\mathcal{C}\left(\OInt,|I|,\OInt+1\right) = (-1)^{\OInt-|I|}$, which is the situation depicted in the previous figures, as well as the one most commonly discussed in the literature \cite{sorkin1994quantum,ududec2011three}.

Rather than work directly with these physical projectors, it is mathematically more convenient to work with (generally) unphysical transformations corresponding to projectors onto the ``coherences'' of a state. For example, in the case of a qutrit, the projector $P_{\{0,1\}}$ projects onto a two dimensional subspace:
\[ P_{\{0,1\}}::\left(\begin{array}{ccc} \rho_{00} & \rho_{01} & \rho_{02} \\ \rho_{10} & \rho_{11} & \rho_{12} \\\rho_{20} & \rho_{21} & \rho_{22}   \end{array}\right)\mapsto \left(\begin{array}{ccc} \rho_{00} & \rho_{01} & 0 \\ \rho_{10} & \rho_{11} & 0 \\0 & 0 & 0   \end{array}\right)\] 
whilst the coherence-projector $\omega_{\{0,1\}}$ projects only onto the coherences in that two dimensional subspace:
\[ \omega_{\{0,1\}}::\left(\begin{array}{ccc} \rho_{00} & \rho_{01} & \rho_{02} \\ \rho_{10} & \rho_{11} & \rho_{12} \\\rho_{20} & \rho_{21} & \rho_{22}   \end{array}\right)\mapsto \left(\begin{array}{ccc} 0 & \rho_{01} & 0 \\ \rho_{10} & 0 & 0 \\0 & 0 & 0   \end{array}\right).\]
That is, $\omega_{\{0,1\}}$ corresponds to the linear combination of projectors: $P_{\{0,1\}}-P_{\{0\}}-P_{\{1\}}$.

There is a coherence-projector $\omega_I$ for each subset of slits $I \subseteq \mathbf{N}$, defined in terms of the physical projectors:
$$\omega_I:=\sum_{\tilde{I}\subseteq I}(-1)^{|I|+|\tilde{I}|}P_{\tilde{I}}.$$ These have the following useful properties, proved in appendix~\ref{AppendixCoherence}.
\begin{lemma} \label{lemma: decompisition of the identity into coherences}
An equivalent definition of the maximal order of interference, $\OInt$, is: $\mathds{1}_N=\sum_{I,|I|=1}^\OInt\omega_I,$ for all $ N \geq \OInt.$
\end{lemma}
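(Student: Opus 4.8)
The plan is to show that the two right-hand sides defining the maximal order of interference---the one written with the combinatorial weights $\mathcal{C}(\OInt,|I|,N)$ and the one written with the coherence-projectors $\omega_I$---are literally the \emph{same} linear combination of the physical projectors $\{P_I\}$. Since each statement asserts that this common combination equals $\mathds{1}_N$, one holds if and only if the other does, which is exactly the claimed equivalence. Thus it suffices to prove the algebraic identity
\[\sum_{\substack{I \subseteq \mathbf{N}\\ 1\leq |I| \leq \OInt}} \omega_I \;=\; \sum_{\substack{I \subseteq \mathbf{N}\\ |I|\leq \OInt}} \mathcal{C}(\OInt,|I|,N)\, P_I\]
for every $N \geq \OInt$, using the fact that $P_\emptyset = 0$ (blocking every slit annihilates the system).

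First I would expand the left-hand side via $\omega_I = \sum_{\tilde I \subseteq I}(-1)^{|I|+|\tilde I|}P_{\tilde I}$ and interchange the order of summation to collect the coefficient of each physical projector $P_J$. Because $P_\emptyset = 0$, I only need to track the nonempty $J$. For such a $J$ the coefficient is
\[\sum_{\substack{J \subseteq I \subseteq \mathbf{N}\\ |I|\leq \OInt}} (-1)^{|I|+|J|},\]
and since the only freedom is the choice of the $|I|-|J|$ extra elements of $I$ from the $N-|J|$ elements outside $J$, reindexing by $k=|I|-|J|$ turns this into the alternating partial binomial sum $\sum_{k=0}^{\OInt-|J|}(-1)^k \binom{N-|J|}{k}$ (the constraint $|I|\geq 1$ being automatic once $J\neq\emptyset$).

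The next step is to evaluate this sum using the standard identity $\sum_{k=0}^{m}(-1)^k\binom{n}{k}=(-1)^m\binom{n-1}{m}$, which follows immediately from Pascal's rule $\binom{n}{k}=\binom{n-1}{k}+\binom{n-1}{k-1}$ by telescoping. With $n=N-|J|$ and $m=\OInt-|J|$ this yields exactly $(-1)^{\OInt-|J|}\binom{N-|J|-1}{\OInt-|J|}=\mathcal{C}(\OInt,|J|,N)$, matching the coefficient of $P_J$ on the right-hand side. The degenerate cases are then checked directly: if $|J|>\OInt$ the sum is empty and both sides have coefficient $0$, while for $J=\emptyset$ the two coefficients differ (by the same telescoped identity they disagree precisely by $1$), but this discrepancy multiplies $P_\emptyset=0$ and hence drops out.

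The only real subtlety---the place to be careful rather than a place of genuine difficulty---is the bookkeeping around the empty set: one must notice that the $\omega$-sum and the $\mathcal{C}$-sum need \emph{not} agree on the coefficient of $P_\emptyset$, and that the identity is rescued entirely by $P_\emptyset=0$. Note also that no linear independence of the $\{P_I\}$ is needed, since the argument matches the two expressions coefficient by coefficient as formal combinations of the $P_I$; evaluating the alternating binomial sum is the single computational ingredient. As a sanity check, at $\OInt=2$, $N=3$ this reproduces the familiar combination $P_{\{1,2\}}+P_{\{1,3\}}+P_{\{2,3\}}-P_{\{1\}}-P_{\{2\}}-P_{\{3\}}$ displayed in the preceding figure.
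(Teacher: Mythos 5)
Your proposal is correct and follows essentially the same route as the paper's own proof: expand each $\omega_I$ via $\omega_I=\sum_{\tilde I\subseteq I}(-1)^{|I|+|\tilde I|}P_{\tilde I}$, collect the coefficient of each fixed $P_J$ as the alternating partial sum $\sum_{k=0}^{\OInt-|J|}(-1)^k\binom{N-|J|}{k}$, and identify it with $\mathcal{C}(\OInt,|J|,N)$. You merely make explicit two points the paper leaves implicit---the telescoping binomial identity behind its ``straightforwardly (if tediously)'' step, and the $P_\emptyset=0$ bookkeeping for the empty set---both of which are handled correctly.
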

The above lemma implies that any state (indeed, any vector in the vector space generated by the states) can be decomposed as $s=\sum_{I,|I|=1}^\OInt s_I,$ where $s_I:=\omega_Is$.
\begin{lemma} \label{lemma: orthogonality of coherences}
``Coherences are orthogonal'': $\mathrm{i)}$ $\omega_I\omega_J=\delta_{IJ}\omega_I$, for all $I,J$ and $\mathrm{ii)}$ $\Norm{s}^2=\sum_I\lVert\omega_Is\rVert^2$
\end{lemma}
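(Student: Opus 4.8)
The plan is to prove the two claims in turn, establishing the multiplicative structure (i) by a direct combinatorial computation and then deducing the Pythagorean identity (ii) from it together with self-adjointness of the coherence-projectors. Throughout I would work from the definition $\omega_I=\sum_{\tilde I\subseteq I}(-1)^{|I|+|\tilde I|}P_{\tilde I}$ and the semilattice relation $P_AP_B=P_{A\cap B}$, which makes $\{P_I\}$ a commuting family of idempotents and identifies the $\omega_I$ as the orthogonal idempotents obtained from the $P_I$ by M\"obius inversion over the Boolean lattice $2^{\mathbf{N}}$.

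For part (i), expanding the product gives $\omega_I\omega_J=\sum_{A\subseteq I,\,B\subseteq J}(-1)^{|I|+|J|+|A|+|B|}P_{A\cap B}$, and I would collect the coefficient of each $P_C$. Since $A\cap B=C$ forces $C\subseteq I\cap J$, writing $A=C\cup A'$ and $B=C\cup B'$ with $A'\subseteq I\setminus C$, $B'\subseteq J\setminus C$ disjoint reduces this coefficient to $(-1)^{|I|+|J|}\sum(-1)^{|A'|+|B'|}$ summed over all such disjoint pairs (the factor $(-1)^{2|C|}$ cancelling).

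The key step is that this sum factorises over the elements of $(I\cup J)\setminus C$: an element lying in only one of $I,J$ can go into the corresponding primed set or into neither, contributing a local factor $(-1)+1=0$, whereas an element of $(I\cap J)\setminus C$ contributes $(-1)+(-1)+1=-1$. Hence the whole coefficient vanishes whenever the symmetric difference $I\triangle J$ is non-empty, i.e. whenever $I\neq J$, so $\omega_I\omega_J=0$ for $I\neq J$; and for $I=J$ the prefactor $(-1)^{|I|+|J|}$ equals $1$ and the surviving factor $(-1)^{|I|-|C|}$ reproduces exactly the coefficients in the definition of $\omega_I$, giving $\omega_I\omega_I=\omega_I$. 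Together these yield $\omega_I\omega_J=\delta_{IJ}\omega_I$.

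For part (ii), using Lemma~\ref{lemma: decompisition of the identity into coherences} I would write $s=\sum_I\omega_Is=\sum_I s_I$ and expand $\Norm{s}^2=\langle s,s\rangle=\sum_{I,J}\langle\omega_Is,\omega_Js\rangle$. The cross terms vanish provided each $\omega_I$ is self-adjoint for the self-dualising inner product, since then $\langle\omega_Is,\omega_Js\rangle=\langle s,\omega_I\omega_Js\rangle=\delta_{IJ}\langle s,\omega_Is\rangle=\delta_{IJ}\Norm{\omega_Is}^2$ by part (i) and idempotency. The main obstacle is therefore establishing this self-adjointness: for the rank-one projectors $P_{\{i\}}=\langle a^i,\cdot\rangle\,a^i$ it is immediate from symmetry of the inner product, but for a general $P_I$ it relies on the compatibility between the physical projectors and the self-dual structure coming from the five principles, which I would either cite from \cite{barnum2014higher} or derive from invariance of the inner product under the reversible symmetries guaranteed by strong symmetry.
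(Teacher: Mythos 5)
Your proposal is correct and follows essentially the same route as the paper: part (i) by expanding $\omega_I\omega_J$ using $P_AP_B=P_{A\cap B}$ and evaluating the signed coefficient of each $P_C$ with $C\subseteq I\cap J$, and part (ii) exactly as in the paper, namely self-adjointness of the $\omega_I$ inherited from self-duality of the physical projectors $P_I$ (cited from \cite{barnum2014higher}) followed by killing the cross terms with part (i). The only difference is cosmetic: where the paper shows the signed count $\mathcal{D}(I,J,\widetilde{K})$ vanishes for $I\neq J$ via a parity-flipping pairing (toggling a fixed $i\in I\setminus J$ in $\widetilde{I}$), you factorize the alternating sum over disjoint pairs $(A',B')$ into per-element local factors ($0$ for each element of $I\mathbin{\triangle}J$, $-1$ for each element of $(I\cap J)\setminus C$), an equally valid---and arguably tidier---way of obtaining the same value $\delta_{IJ}(-1)^{|I|+|C|}$.
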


\section{Setting up the problem} 
In the standard search problem, one is asked to find a specific ``marked'' item from among a large collection of items in some unstructured list. The items are indexed $1, \dots, N$ and one has access to an oracle, which, when asked whether item $i$ is the marked item, denoted $x$, returns the answer ``yes'' or ``no''. Informally, the search problem asks for the minimal number of queries to this oracle required to find $x$ in the worst case.

In the standard bra-ket formalism of quantum theory, this oracle corresponds to a controlled unitary transformation $U$, defined by its action on the (product) computational basis:
$ U |i\rangle |q\rangle = |i\rangle |q\oplus f(i)\rangle,$
where $|i\rangle$ is the index, or control, register, $|q\rangle$ is the target register, $\oplus$ denotes addition modulo 2 and $f:\{1,\dots, N\}\rightarrow\{0,1\}$ satisfies $f(i)=1$ if and only if $i=x$. Inputting $|-\rangle$ into the target register results in a phase being ``kicked-back'' to the control register: $U|i \rangle |-\rangle=(-1)^{f(i)}|i\rangle|-\rangle.$ Discarding the target register reduces the action of the oracle to applying the phase transformation $O_{x}|i \rangle =(-1)^{f(i)}|i\rangle$.
Changing to the density matrix formalism, we see that this phase oracle, whose action on states $\rho$ is now denoted by $\Ora{x}\rho$, 
acts as the identity on the diagonal elements of all density matrices whilst adding a `$-$' to the off diagonal elements $\{\rho_{xi},\rho_{ix}\}_{i}.$ 

Previous work \cite{lee2015generalised} has shown that the conjunction of principles 1, 2, 3 and 5 implies the existence of reversible controlled transformations. These can be used to define oracles in a manner analogous to quantum theory \cite{lee2015generalised}. 
Moreover, every controlled transformation gives rise to a ``kicked-back'' reversible phase transformation on the control system \cite{lee2015generalised}. Thus---as in quantum theory---from the point of view of querying the oracle, we can reduce all considerations involving the controlled transformation to those involving the kicked-back phase.

To highlight the role of interference in searching an unstructured list, we describe the action of querying the oracle in terms of the physically motivated set-up of $N$-slit experiments. 
Consider first the quantum case. Note that an $N$-slit experiment defines a set of $N$ pure and perfectly distinguishable states $\ket{i}\bra{i}$, each of which can be associated to a distinct element in the $N$ item list. Querying the oracle about item $i$ is equivalent to applying the oracle transformation to state $\ket{i}\bra{i}$. In quantum theory, preparing such a state can be achieved by passing a uniform superposition through the $N$-slit experiment with all but the $i$th slit blocked. The oracle can be implemented by placing a phase shifter behind slit $x$. Querying the oracle in a superposition of states can then be achieved by varying which slits are blocked. This is illustrated schematically below:
\[\begin{tikzpicture}[scale=0.75]
	\begin{pgfonlayer}{nodelayer}
		\node [style=none] (0) at (1, 1.25) {};
		\node [style=none] (1) at (1.5, 1) {};
		\node [style=none] (2) at (1.5, 0.75) {};
		\node [style=none] (3) at (1, 0.5) {};
		\node [style=none] (4) at (1.5, 1.5) {};
		\node [style=none] (5) at (3, 1.5) {};
		\node [style=none] (6) at (3, 0.25) {};
		\node [style=none] (7) at (1.5, 0.25) {};
		\node [style=none] (8) at (3.5, 0.5) {};
		\node [style=none] (9) at (3, 0.75) {};
		\node [style=none] (10) at (3.5, 1.25) {};
		\node [style=none] (11) at (3, 1) {};
		\node [style=none] (12) at (-2, 2.75) {};
		\node [style=none] (13) at (-0.5, 1.5) {};
		\node [style=none] (14) at (-2, 1.5) {};
		\node [style=none] (15) at (-0.5, 2.75) {};
		\node [style=none] (16) at (-0.5, 2) {};
		\node [style=none] (17) at (-0.5, 2.25) {};
		\node [style=none] (18) at (0, 2.5) {};
		\node [style=none] (19) at (0, 1.75) {};
		\node [style=none] (20) at (0, 2.125) {};
		\node [style=arrowhead, rotate=30] (21) at (0.5, 1.5) {};
		\node [style=none] (22) at (3.5, 0.75) {};
		\node [style=none] (23) at (-1.25, 2.125) {$s^i$};
		\node [style=none] (24) at (2.25, 0.75) {$\mathcal{O}_x$};
		\node [style=none] (25) at (1, 0.875) {};
	\end{pgfonlayer}
	\begin{pgfonlayer}{edgelayer}
		\draw (0.center) to (1.center);
		\draw (3.center) to (2.center);
		\draw (4.center) to (7.center);
		\draw (7.center) to (6.center);
		\draw (6.center) to (5.center);
		\draw (5.center) to (4.center);
		\draw (10.center) to (11.center);
		\draw (8.center) to (9.center);
		\draw [style=none] (0.center) to (3.center);
		\draw [style=none] (10.center) to (8.center);
		\draw (12.center) to (14.center);
		\draw (14.center) to (13.center);
		\draw (13.center) to (15.center);
		\draw (15.center) to (12.center);
		\draw (18.center) to (17.center);
		\draw (19.center) to (16.center);
		\draw [style=none] (18.center) to (19.center);
		\draw [style=none, color=black, in=120, out=-15, looseness=1.25] (20.center) to (21.center);
		\draw [style=none, color=black, in=165, out=-60, looseness=1.00] (21.center) to (25.center);
	\end{pgfonlayer}
\end{tikzpicture}
\begin{tikzpicture}[scale=0.75]
	\begin{pgfonlayer}{nodelayer}
		\node [style=none] (0) at (1.25, 3.75) {};
		\node [style=none] (1) at (1.25, 4.75) {};
		\node [style=none] (2) at (2, 2.5) {};
		\node [style=none] (3) at (2.75, 2.25) {};
		\node [style=none] (4) at (2.75, 1.5) {};
		\node [style=none] (5) at (1.75, 1.75) {};
		\node [style=none] (6) at (2.25, 2) {$\Ora{x}$};
		\node [style=none] (7) at (1.25, 3.25) {};
		\node [style=none] (8) at (1.25, 2.25) {};
		\node [style=none] (9) at (1.25, 0.2499996) {};
		\node [style=none] (10) at (1.25, -0.7499997) {};
		\node [style=none] (11) at (1.25, 1.75) {};
		\node [style=none] (12) at (1.25, 0.7500002) {};
		\node [style=none] (13) at (1.75, 2.75) {};
		\node [style=none] (14) at (-1.25, 2) {};
		\node [style=none] (15) at (1.25, 0.5000001) {};
		\node [style=none] (16) at (4.25, 1.25) {};
		\node [style=none] (17) at (1.25, 4.25) {};
		\node [style=none] (18) at (1.25, 1.25) {};
		\node [style=none] (19) at (5.5, 2) {$=$};
		\node [style=none] (20) at (9.499999, 3.75) {};
		\node [style=none] (21) at (9.499999, 0.2499996) {};
		\node [style=none] (22) at (9.499999, 2.25) {};
		\node [style=none] (23) at (9.499999, 0.5000001) {};
		\node [style=none] (24) at (9.499999, 1.25) {};
		\node [style=none] (25) at (9.499999, -0.7499997) {};
		\node [style=none] (26) at (12.5, 1.25) {};
		\node [style=none] (27) at (9.499999, 1.75) {};
		\node [style=none] (28) at (9.499999, 4.25) {};
		\node [style=none] (29) at (9.499999, 4.75) {};
		\node [style=none] (30) at (9.499999, 3.25) {};
		\node [style=none] (31) at (9.499999, 0.7500002) {};
		\node [style=none] (32) at (7, 2) {};
		\node [style=none] (33) at (1.25, 3.5) {};
		\node [style=none] (34) at (1.25, 2) {};
		\node [style=none] (35) at (9.499999, 3.5) {};
		\node [style=none] (36) at (9.499999, 2) {};
		\node [style=none] (37) at (11.5, 1) {};
		\node [style=none] (38) at (0.7499997, 0.2499996) {$i$};
		\node [style=none] (39) at (-2.75, 2) {``$=$''};
		\node [style=none] (40) at (-4, 2) {};
	\end{pgfonlayer}
	\begin{pgfonlayer}{edgelayer}
		\draw [style=block] (28.center) to (24.center);
		\draw [style=block] (17.center) to (18.center);
		\draw [style=slit] (1.center) to (0.center);
		\draw [style=wavy, in=105, out=-165, looseness=1.00] (2.center) to (5.center);
		\draw [style=wavy, in=-135, out=-45, looseness=0.75] (5.center) to (4.center);
		\draw [style=wavy, in=-75, out=45, looseness=1.00] (4.center) to (3.center);
		\draw [style=wavy, bend right, looseness=1.25] (3.center) to (2.center);
		\draw [style=slit] (7.center) to (8.center);
		\draw [style=slit] (11.center) to (12.center);
		\draw [style=slit] (9.center) to (10.center);
		\draw [style=none, color=gray] (14.center) to (15.center);
		\draw [style=none, color=gray] (15.center) to (16.center);
		\draw [style=slit] (29.center) to (20.center);
		\draw [style=slit] (30.center) to (22.center);
		\draw [style=slit] (27.center) to (31.center);
		\draw [style=slit] (21.center) to (25.center);
		\draw [style=none, color=gray] (32.center) to (23.center);
		\draw [style=none, color=gray] (23.center) to (26.center);
		\draw [style=none, color=gray] (14.center) to (33.center);
		\draw [style=none, color=gray] (14.center) to (34.center);
		\draw [style=none, color=gray] (32.center) to (35.center);
		\draw [style=none, color=gray] (32.center) to (36.center);
	\end{pgfonlayer}
\end{tikzpicture} \]
As discussed previously, the physical act of blocking slits is represented by the projectors $P_I$. The action of the quantum oracle can thus be rephrased in terms of these projectors: i) $\Ora{x} P_I=P_I$, if $x \notin I$ or $|I|=1$ and, ii) $\Ora{x}$ can act non-trivially on projectors $P_I$ with $x \in I$ and $|I|>1$, but must satisfy $\Ora{x}P_I=P_I\Ora{x}$, for all $P_I$, which corresponds to the fact that a quantum oracle does not ``create'' or ``destroy'' coherence between states passing through different slits.

By analogy with the quantum case we can define the oracle which encodes the search problem in theories satisfying principles $1$ to $5$ as follows. Note that in this paper we only deal with the case of a single marked item.
\begin{definition} \label{Definition: Search Oracle}
A reversible transformation is a \emph{search oracle}, denoted $\Ora{x}$, if and only if:
$$ \begin{aligned} \mathrm{i)} \ \Ora{x}P_I&=P_I  \text{ for} \ \text{all} \ x\notin I \ \text{or} \ |I|=1 \ \text{and,} \\
\mathrm{ii)} \ \Ora{x}P_I&=P_I \Ora{x}\text{, for all }P_I. \end{aligned}$$
\end{definition} 
In the above definition, the requirement $\Ora{x}P_I=P_I \Ora{x}$, for all $P_I$, is quite natural. This requirement ensures that one cannot gain any information about item $i$ when querying the oracle using a state with no support on $i$, i.e. a state $s$ such that $P_I s=s$ where $i \notin I$.
In an arbitrary theory, it may not be the case that a transformation satisfying definition~\ref{Definition: Search Oracle} and acting non-trivially on $P_I$, with $x \in I$, exists. This is not an issue as in such theories we cannot even define the search problem, let alone show it can be solved using fewer queries than quantum theory. In this work, we shall assume the existence of a search oracle in any theory we consider.
Given the definition of coherence-projectors $\omega_I$ we can equivalently write definition~\ref{Definition: Search Oracle} as: 
$\Ora{x} \omega_I=\omega_I$, for $x \notin I$ or $|I|=1$, and
$\Ora{x}\omega_I=\omega_I \Ora{x}$, for all $I$.
Indeed, in the quantum case, the action of the oracle can be equivalently described as:
$\Ora{x} \omega_I =  \omega_I \ \text{ if } x\not\in I \ \text{ or } |I|=1 $, and $\Ora{x}\omega_I=  -\omega_I \ \text{ otherwise}$.

We can now formally state the search problem for a single marked item---defined for the quantum case in \cite{nielsen2010quantum,zalka1999grover,boyer1996tight}---as:

\begin{search problem}
Given an $N$ element list with search oracle $\Ora{x}$ and an arbitrary collection of reversible transformations $\{G_i\}$, what is the minimal $k \in \mathbb{N}$ such that $G_k\Ora{x}G_{k-1}\dots G_1\Ora{x}s$ can be found, with probability greater than $1/2$, to be in the state $x$, for arbitrary state $s$, averaged over all possible marked items?
\end{search problem}

\section{Main result}
\begin{theorem} \label{Main Theorem}
In theories satisfying principles $1$ to $5$, with finite maximal order of interference $\OInt$, the number of queries needed to solve the search problem is $\Omega(\sqrt{N/\OInt})$.
\end{theorem}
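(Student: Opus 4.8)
The plan is to adapt the Bennett--Bernstein--Brassard--Vazirani hybrid argument to the generalised probabilistic setting, using the coherence decomposition (Lemma~\ref{lemma: decompisition of the identity into coherences}), the coherence orthogonality (Lemma~\ref{lemma: orthogonality of coherences}) and the self-dualising inner product as replacements for the Hilbert-space structure. Fix the number of queries $k$ together with the algorithm's initial state $s$ and reversible transformations $\{G_i\}$. For each marked item $x$ let $s^x_j$ denote the state after the first $j$ oracle queries of $G_k\Ora{x}G_{k-1}\cdots G_1\Ora{x}s$, and let $s_j$ denote the corresponding \emph{reference} state obtained by running the same $G_i$ but replacing every oracle by the identity. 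Since reversible transformations preserve the inner product, $\Norm{s_j}=\Norm{s}\leq 1$ for all $j$. Writing the one-step difference as $s^x_{j+1}-s_{j+1}=G_{j+1}\big(\Ora{x}(s^x_j-s_j)+(\Ora{x}-\mathds{1})s_j\big)$ and using norm-invariance of $G_{j+1}$ and $\Ora{x}$, a telescoping induction yields $\Norm{s^x_k-s_k}\leq\sum_{j}\Norm{(\Ora{x}-\mathds{1})s_j}$.

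The crux of the argument --- and the step I expect to be the main obstacle --- is to bound the total one-query disturbance $\sum_x\Norm{(\Ora{x}-\mathds{1})s_j}^2$ across all possible oracles. Decomposing $s_j=\sum_{I}\omega_I s_j$ via Lemma~\ref{lemma: decompisition of the identity into coherences} and applying the oracle condition $\Ora{x}\omega_I=\omega_I$ for $x\notin I$ or $|I|=1$, only coherences $\omega_I s_j$ with $x\in I$ and $|I|>1$ survive in $(\Ora{x}-\mathds{1})s_j$. The commutation relation $\Ora{x}\omega_I=\omega_I\Ora{x}$ together with idempotence of $\omega_I$ shows $(\Ora{x}-\mathds{1})\omega_I s_j$ lies in the image of $\omega_I$, so Lemma~\ref{lemma: orthogonality of coherences} lets the norm-squared split as $\Norm{(\Ora{x}-\mathds{1})s_j}^2=\sum_{I:\,x\in I,\,|I|>1}\Norm{(\Ora{x}-\mathds{1})\omega_I s_j}^2$. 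Each summand is at most $4\Norm{\omega_I s_j}^2$ by the triangle inequality and reversibility of $\Ora{x}$. Summing over $x$ then counts each subset $I$ exactly $|I|\leq\OInt$ times, giving $\sum_x\Norm{(\Ora{x}-\mathds{1})s_j}^2\leq 4\OInt\sum_I\Norm{\omega_I s_j}^2=4\OInt\Norm{s_j}^2\leq 4\OInt$. This is precisely where the order of interference enters: the constraint $|I|\leq\OInt$ limits how many oracles a single coherence can couple to, and hence controls the total disturbance.

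Combining the telescoping bound with Cauchy--Schwarz (over the $k$ query steps) gives the upper bound $D_k:=\sum_x\Norm{s^x_k-s_k}^2\leq k\sum_x\sum_j\Norm{(\Ora{x}-\mathds{1})s_j}^2\leq 4\OInt k^2$. For the matching lower bound I would use the success condition: solving the search problem with average success probability above $1/2$ means $\frac1N\sum_x\langle x,s^x_k\rangle>\tfrac12$, where $\{x\}$ are the $N$ pure, perfectly distinguishable answer states with $\langle x,x'\rangle=\delta_{xx'}$ and associated effects $e^x=\langle x,\cdot\rangle$. Writing $\langle x,s^x_k\rangle=\langle x,s_k\rangle+\langle x,s^x_k-s_k\rangle$, bounding $\sum_x\langle x,s_k\rangle\leq\sqrt N\,\Norm{s_k}\leq\sqrt N$ by Cauchy--Schwarz and Bessel's inequality, and $\sum_x\langle x,s^x_k-s_k\rangle\leq\sum_x\Norm{s^x_k-s_k}\leq\sqrt N\sqrt{D_k}$, gives $N/2<\sqrt N+\sqrt N\sqrt{D_k}$, hence $D_k=\Omega(N)$. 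Chaining $\Omega(N)\leq D_k\leq 4\OInt k^2$ forces $k=\Omega(\sqrt{N/\OInt})$, as claimed. The outer hybrid-and-counting skeleton is a faithful transcription of the quantum argument once the inner product and coherence orthogonality are available; the only genuinely delicate point is the coherence bookkeeping of the middle paragraph, where reversibility, commutation with $\omega_I$, and the $|I|\leq\OInt$ cardinality bound must be combined correctly.
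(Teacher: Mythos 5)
Your proposal is correct and takes essentially the same approach as the paper: the same hybrid quantity $D_k=\sum_x\Norm{s_k^x-s_k}^2$, the same coherence decomposition via Lemmas~\ref{lemma: decompisition of the identity into coherences} and~\ref{lemma: orthogonality of coherences}, and the identical theory-dependent bound $\sum_x\Norm{(\mathds{1}-\Ora{x})s_j}^2\leq 4\OInt$ obtained from the oracle conditions, commutation with $\omega_I$, and the $|I|\leq\OInt$ counting. Your only departures are cosmetic: telescoping plus Cauchy--Schwarz in place of the paper's induction $D_{k+1}\leq\bigl(\sqrt{D_k}+\sqrt{4\OInt}\bigr)^2$, and a direct Cauchy--Schwarz estimate of $\sum_x\langle x,s_k^x\rangle$ under the (averaged) success condition in place of the paper's $E_k$/$F_k$ reverse-triangle argument, both of which yield the same $D_k=\Omega(N)$ and hence $k=\Omega(\sqrt{N/\OInt})$.
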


\begin{proof}[Proof of theorem~\ref{Main Theorem}]
The basic idea is based on the proof of the quantum case presented in \cite{nielsen2010quantum,boyer1996tight,zalka1999grover}.
Let
$$\begin{aligned}
s_k^x&=G_k\Ora{x}G_{k-1}\dots G_1\Ora{x}s, \\
s_k&=G_kG_{k-1}\dots G_1s,
\end{aligned}$$
where $G_i$ is some reversible transformation from the theory, and define
$$D_k=\sum_x \Vert s_k^x - s_k \Vert^2.$$
It will be shown that, for $\langle x,s_k^x \rangle \geq 1/2$, we have $cN \leq D_k \leq 4hk^2$, where $c$ is any constant less than $\left(\sqrt{2}-1\right)^2$, from which the result $k \geq O\left(\sqrt{\frac{N}{h}}\right)$ follows. The lower bound goes through as in the quantum case and is derived in appendix~\ref{Appendix: lower bound for main proof}. The upper bound will now be proved by induction.

We have 
$$\begin{aligned}
D_{k+1}&=\sum_x \Vert G_{k+1} \left( \Ora{x} s_k^x -s_k\right)\Vert^2=\sum_x \Vert \Ora{x} s_k^x -s_k\Vert^2 \\
&=\sum_x \Vert \Ora{x}\left( s_k^x -s_k\right) + \left(\Ora{x}-\mathds{1}\right)s_k\Vert^2\\
&\leq \sum_x \Vert s_k^x-s_k\Vert^2 \\ 
& \quad \quad +2\sum_x \Vert \Ora{x}\left( s_k^x -s_k\right)\Vert \Vert \left(\Ora{x}-\mathds{1}\right)s_k\Vert \\
& \quad \quad \quad \quad \quad +\sum_x \Vert\left(\Ora{x}-\mathds{1}\right)s_k\Vert^2 \\
&\leq D_k +2\sqrt{D_k \sum_x \Vert\left(\Ora{x}-\mathds{1}\right)s_k\Vert^2} +\Vert\left(\Ora{x}-\mathds{1}\right)s_k\Vert^2 \\
&\leq \left(\sqrt{D_k} + \sqrt{\sum_x \Vert (\mathds{1}-\mathcal{O}_x)s_k \Vert^2} \right)^2,
\end{aligned}
$$
which follows from the triangle inequality, the Cauchy-Schwarz inequality, and the fact the norm is invariant under reversible transformations. 

The quantity $\sum_x\lVert(\mathds{1}-\mathcal{O}_x)s_k\rVert^2$---which can be thought of as how much some state is ``moved'' in a single query, averaged over all possible marked items $x$---is the only theory dependent quantity that features in this proof. We upper bound it as follows: 
$$ \begin{aligned}  
&\sum_x \lVert(\mathds{1}-\mathcal{O}_x){s_k}\rVert^2  \\
&=\sum_x\sum_I \lVert(\mathds{1}-\mathcal{O}_x)\omega_I{s_k}\rVert^2\\
&=\sum_x\sum_{{\tiny \begin{array}{c}I \\|I|>1\\ x\in I\end{array}}} \lVert\omega_I(\mathds{1}-\mathcal{O}_x){s_k}\rVert^2\\
&\leq\sum_x\sum_{{\tiny \begin{array}{c}I \\|I|>1\\ x\in I\end{array}}} \left(\lVert\mathds{1}\omega_I{s_k}\rVert+\lVert\mathcal{O}_x\omega_I{s_k}\rVert\right)^2 \\
& \leq\sum_x\sum_{{\tiny \begin{array}{c}I \\|I|>1\\ x\in I\end{array}}} 4\lVert\omega_I{s_k}\rVert^2,
\end{aligned}$$
where the first line follows from lemma~\ref{lemma: decompisition of the identity into coherences}, lemma~\ref{lemma: orthogonality of coherences}, and the definition of the search oracle $\Ora{x}$, and second from the triangle inequality and the fact that the norm is invariant under reversible transformations. We need to know how many times each $\lVert\omega_I{s_k}\rVert^2$ appears when we sum over the marked item $x$. Each given $I=\{i_1, i_2, \dots, i_{|I|}\}$ will appear $|I|$ times as we sum over $x$, one for every time $i_j$ is the marked item. Thus

$$ \begin{aligned} 
&\sum_x \lVert(\mathds{1}-\mathcal{O}_x){s_k}\rVert^2 \leq \sum_{{\tiny \begin{array}{c}I \\|I|>1\end{array}}} 4|I|\lVert\omega_I{s_k}\rVert^2\\
&\leq 4 \sum_{I} |I|\lVert\omega_I{s_k}\rVert^2\leq 4\OInt \sum_{I} \lVert\omega_I{s_k}\rVert^2= 4\OInt \lVert{s_k}\rVert^2\leq 4\OInt.
\end{aligned}$$
The second line follows from $\sum_{|I|=1}\Vert \omega_I s_k \Vert^2 \geq 0$, lemma~\ref{lemma: orthogonality of coherences}, $\Vert s_k\Vert \leq 1$, and $|I| \leq \OInt$, for all $I$.
We thus have:
$
\begin{aligned}
D_{k+1}\leq \left(\sqrt{D_k} + \sqrt{4h}\right)^2. 
\end{aligned}$
Assuming that $D_k \leq 4hk^2$ gives us $D_{k+1} \leq 4h(k+1)^2,$ from which the result follows via induction.
\end{proof}

\section{Discussion} 
In this work, we considered theories satisfying certain natural physical principles which are sufficient for the existence of controlled transformations and a phase kick-back mechanism, necessary features for a well-defined search oracle. Given these physical principles, we proved that a theory with maximal order of interference $\OInt$ requires $\Omega(\sqrt{N/\OInt})$ queries to this oracle to find a single marked item from some $N$-element list. This result challenges our pre-conceived notions about how quantum computers achieve their computational advantage and is somewhat surprising as one might expect more interference to imply more computational power. 
Further work will focus on determining sufficient physical principles for there to exist an algorithm that achieves the quadratic lower bound derived here.
 
Recent work has also investigated Grover's algorithm from the point of view of post-quantum theories \cite{aaronson2016space,bao2015grover}. These works considered modifications of quantum theory which allow for superluminal signalling and cloning of states. In contrast, the generalised probabilistic theory framework employed here allowed us to investigate Grover's lower bound in alternate theories that are physically reasonable and which, for example, do not allow for superluminal signalling \cite{barrett2007information} or cloning \cite{barnum}. 

As theories satisfying our five physical principles appear `quantum-like'---at least from the point of view of the search problem---investigating interference behaviour in them may inform current experiments searching for post-quantum interference.

\emph{Acknowledgements---}The authors thank H. Barnum and M. J. Hoban for useful discussions and M. J. Hoban for proof reading a draft of the current paper. The authors also acknowledge encouragement and support from J. J. Barry. This work was supported by the EPSRC through the Controlled Quantum Dynamics Centre for Doctoral Training and the Oxford Department of Computer Science. CML also acknowledges funding from University College, Oxford. 

\appendix 
\section{Results for coherences} \label{AppendixCoherence}
\subsection{Proof of lemma~\ref{lemma: decompisition of the identity into coherences}} 
In a theory with maximal order of interference $\OInt$ one has
\[\mathds{1}_N = \sum_{{\small \begin{array}{c} I\subseteq \mathbf{N} \\ |I|\leq \OInt \end{array}}}\mathcal{C}\left(\OInt,|I|,N\right)P_I.\]
Thus, showing $\mathds{1}_N=\sum_{|I|=1}^\OInt \omega_I$ reduces to showing
\[\sum_{|I|=1}^\OInt \omega_I= \sum_{{\small \begin{array}{c} I\subseteq \mathbf{N} \\ |I|\leq \OInt \end{array}}}\mathcal{C}\left(\OInt,|I|,N\right)P_I.\] As $\omega_I:=\sum_{\tilde{I}\subseteq I}(-1)^{|I|+|\tilde{I}|}P_{\tilde{I}}$, we just have to count the number of $P_I$'s that appear as we sum over $|I|$. For some fixed $I$, this is just
$$ \sum_{\alpha=|I|}^\OInt (-1)^{\alpha-|I|} \left(\begin{array}{c}N-|I|\\\alpha-|I|\end{array}\right).$$ By expanding and rearranging this, one can straightforwardly (if tediously) show that this equals $\mathcal{C}\left(\OInt,|I|,N\right)$, and we are done.
\subsection{Proof of lemma~\ref{lemma: orthogonality of coherences} part $\mathrm{i)}$\label{AppendixCoherence1}}
From the definition of $\omega_I$, it follows that
\[\omega_I\omega_J = (-1)^{|I|+|J|}\sum_{\widetilde{I}\subseteq I}\sum_{\widetilde{J}\subseteq J}(-1)^{|\widetilde{I}|+|\widetilde{J}|}P_{\widetilde{I}}P_{\widetilde{J}}\]
\[=(-1)^{|I|+|J|}\sum_{\widetilde{K}\subseteq I\cap J}\mathcal{D}\left(I,J,\widetilde{K}\right) P_{\widetilde{K}}\]
where $\mathcal{D}\left(I,J,\widetilde{K}\right)$ is the number of distinct pairings of $\widetilde{I}$ and $\widetilde{J}$ such that $\widetilde{I}\cap\widetilde{J}=\widetilde{K}$ and $|\widetilde{I}|+|\widetilde{J}|$ is even, minus the number of distinct pairings where $\widetilde{I}\cap\widetilde{J}=\widetilde{K}$ and $|\widetilde{I}|+|\widetilde{J}|$ is odd. It will now be shown that
\[\mathcal{D}\left(I,J,\widetilde{K}\right)= \left\{ \begin{array}{cc} 0 & \text{ if } I\neq J \\ (-1)^{|I|+|\widetilde{K}|} & \text{ if } I=J \end{array}\right. \]
 
For the $I\neq J$ case fix some particular $i \in I$ such that $i\not\in J$ and consider some $\widetilde{I}\subseteq{I},\widetilde{J}\subseteq{J}$ such that $\widetilde{I}\cap\widetilde{J}=\widetilde{K}$. If $x \notin \widetilde{I}$ alter $\widetilde{I}$ by adding $i$, otherwise alter $\widetilde{I}$ by removing $x$. This procedure turns each even $|\widetilde{I}|+|\widetilde{J}|$, odd. We have thus shown that for each $\widetilde{I}\subseteq{I}$ and $\widetilde{J}\subseteq{J}$ such that $\widetilde{I}\cap\widetilde{J}=\widetilde{K}$ and $|\widetilde{I}|+|\widetilde{J}|$ is even, there exists an $\widetilde{I}'\subseteq{I}$ such that $\widetilde{I}'\cap\widetilde{J}=\widetilde{K}$ and $|\widetilde{I}'|+|\widetilde{J}|$ is odd, and vice versa. Thus the number of distinct pairings of $\widetilde{I}$ and $\widetilde{J}$ such that $\widetilde{I}\cap\widetilde{J}=\widetilde{K}$ and $|\widetilde{I}|+|\widetilde{J}|$ is even is equal to the number of distinct pairings of $\widetilde{I}$ and $\widetilde{J}$ such that $\widetilde{I}\cap\widetilde{J}=\widetilde{K}$ and $|\widetilde{I}|+|\widetilde{J}|$ is odd, and so $\mathcal{D}\left(I,J,\widetilde{K}\right)=0$ when $I\neq J$.

For the $I=J$ case we can make a similar argument by picking some $i\in I, i\not\in \widetilde{J}$ except for when $\widetilde{J}=J=I$. This case gives an excess $\pm 1$ depending on whether $|J|+|\widetilde{K}|$ is odd or even, implying $\mathcal{D}\left(I,J,\widetilde{K}\right)=(-1)^{|I|+|\widetilde{K}|}$ when $I=J$.

This immediately gives $\omega_I\omega_J=0$ if $I\neq J$ and,
\[\omega_I\omega_I = (-1)^{2|I|}\sum_{\widetilde{K}\subseteq I}(-1)^{|I|+|\widetilde{K}|} P_{\widetilde{K}} = \omega_I\]
if $I=J$.

\subsection{Proof of lemma~\ref{lemma: orthogonality of coherences} part $\mathrm{ii)}$\label{AppendixCoherence2}}
 
To prove the lemma, we need the fact that the $\omega_I$'s are self-dual $\omega_I^\dagger=\omega_I$, where the $\dagger$ is defined by the the self-dualising inner-product as: $\langle \cdot, \omega_I \cdot\rangle=\langle\omega_I^\dagger\cdot, \cdot\rangle$. Recalling that the $\omega_I$'s correspond to linear combinations of the $P_I$'s, this follows immediately from self-duality of the projectors $P_I$, which is proved in \cite{barnum2014higher} (Recall that principles $1$ to $5$ imply the first two axioms of \cite{barnum2014higher}). We now have
\[\begin{aligned}
\Vert s \Vert^2 &= \langle s, s\rangle = \langle\sum_I\omega_I s, \sum_J\omega_J s\rangle \end{aligned} \]
\[=\sum_{I,J}\langle\omega_I s,\omega_J s\rangle =\sum_{I,J}\langle s,\omega_I^\dagger\omega_J s\rangle\]
\[=\sum_{I,J}\langle s,\omega_I\omega_J s\rangle=\sum_{I,J}\delta_{IJ}\langle s,\omega_I s\rangle\]
where the last equality follows from the orthogonality of the $\omega_I$'s. Finally
\[\begin{aligned}\Vert s\Vert^2=\sum_{I}\langle s,\omega_I{s}\rangle=\sum_{I}\langle\omega_I{s},\omega_I{s}\rangle= \sum_I\lVert\omega_I{s}\rVert^2 
\end{aligned}\]


\subsection{Proof of $D_k \geq cN$} \label{Appendix: lower bound for main proof}

We assume that $\langle x,{s}_k^x \rangle \geq1/2$ for all $x$, so a measurement of $s_k^x$ yields a solution to the search problem with probability at least $1/2$. Let $E_k=\sum_x \Vert {s}_k^x - x \Vert^2$ and $F_k=\sum_x \Vert {s}_k - x \Vert^2$. It follows that 
$$\begin{aligned} &\text{i) }E_k=\sum_x 2(1- \langle x,{s}_k^x \rangle ) \leq \sum_x 2(1- 1/2) \leq N\text{ and,} \\
 &\text{ii) } F_k\geq 2\left(N-\Vert {s}_k \Vert \sqrt{\left\langle \sum_x x, \sum_y y \right\rangle}\right)\geq 2\left(N-\sqrt{N}\right) \end{aligned}$$
where $\text{ii)}$ follows from the Cauchy-Schwarz inequality, $\Vert {s}_k \Vert \leq 1$ and $\langle x,y \rangle = \delta_{xy}$. As explicitly calculated on page $270$ of \cite{nielsen2010quantum}, by using the reverse triangle inequality and the Cauchy-Schwarz inequality, it follows that
$D_k \geq \left( \sqrt{F_k}-\sqrt{E_k}\right)^2$. Combing this with the upper bound on $E_k$ and the lower bound on $F_k$, we have that $D_k \geq cN,$ for sufficiently large $N$, where $c$ is any constant less than $\left(\sqrt{2}-1\right)^2\approx 0.17$.


\begin{thebibliography}{10} 

\bibitem{aaronson2016space}
Scott Aaronson, Adam Bouland, Joseph Fitzsimons, and Mitchell Lee.
\newblock The space just above bqp.
\newblock In {\em Proceedings of the 2016 ACM Conference on Innovations in
  Theoretical Computer Science}, pages 271--280. ACM, 2016.

\bibitem{bao2015grover}
Ning Bao, Adam Bouland, and Stephen~P Jordan.
\newblock Grover search and the no-signaling principle.
\newblock {\em arXiv preprint arXiv:1511.00657}, 2015.

\bibitem{barnum2014higher}
Howard Barnum, Markus~P M{\"u}ller, Cozmin Ududec, et~al.
\newblock Higher-order interference and single-system postulates characterizing
  quantum theory.
\newblock {\em New Journal of Physics}, 16(12):123029, 2014.

\bibitem{barrett2007information}
Jonathan Barrett.
\newblock Information processing in generalized probabilistic theories.
\newblock {\em Physical Review A}, 75(3):032304, 2007.

\bibitem{barrett2015landscape}
Jonathan Barrett, Niel de~Beaudrap, Ciar{\'a}n~M Lee, and Matty~J Hoban.
\newblock The computational landscape of general physical theories.
\newblock {\em Forthcoming}, 2016.

\bibitem{bennett1997strengths}
Charles~H Bennett, Ethan Bernstein, Gilles Brassard, and Umesh Vazirani.
\newblock Strengths and weaknesses of quantum computing.
\newblock {\em SIAM journal on Computing}, 26(5):1510--1523, 1997.

\bibitem{boyer1996tight}
Michel Boyer, Gilles Brassard, Peter H{\o}yer, and Alain Tapp.
\newblock Tight bounds on quantum searching.
\newblock {\em arXiv preprint quant-ph/9605034}, 1996.

\bibitem{Pavia1}
Giulio Chiribella, Giacomo~Mauro D'Ariano, and Paolo Perinotti.
\newblock Probabilistic theories with purification.
\newblock {\em Physical Review A}, 81(6):062348, 2010.

\bibitem{Pavia2}
Giulio Chiribella, Giacomo~Mauro D'Ariano, and Paolo Perinotti.
\newblock Informational derivation of quantum theory.
\newblock {\em Physical Review A}, 84(1):012311, 2011.

\bibitem{chiribella2015operational}
Giulio Chiribella and Carlo~Maria Scandolo.
\newblock Operational axioms for state diagonalization.
\newblock {\em arXiv preprint arXiv:1506.00380}, 2015.

\bibitem{dowker2014histories}
Fay Dowker, Joe Henson, and Petros Wallden.
\newblock A histories perspective on characterizing quantum non-locality.
\newblock {\em New Journal of Physics}, 16(3):033033, 2014.

\bibitem{grover1997quantum}
Lov~K Grover.
\newblock Quantum mechanics helps in searching for a needle in a haystack.
\newblock {\em Physical review letters}, 79(2):325, 1997.

\bibitem{Hardy2011}
Lucien Hardy.
\newblock Reformulating and reconstructing quantum theory.
\newblock {\em arXib preprint arXiv:1104.2066}, 2011.

\bibitem{henson2015bounding}
Joe Henson.
\newblock Bounding quantum contextuality with lack of third-order
  intereference.
\newblock {\em Phys. Rev. Lett. (arXiv:1406.3281)}, (114):220403, 2015.

\bibitem{janotta2013generalized}
Peter Janotta and Raymond Lal.
\newblock Generalized probabilistic theories without the no-restriction
  hypothesis.
\newblock {\em Physical Review A}, 87(5):052131, 2013.

\bibitem{lee2015computation}
Ciar{\'a}n~M Lee and Jonathan Barrett.
\newblock Computation in generalised probabilisitic theories.
\newblock {\em New Journal of Physics}, 17(8):083001, 2015.

\bibitem{lee2015proofs}
Ciar{\'a}n~M Lee and Matty~J Hoban.
\newblock Proofs and advice in general physical theories: a trade-off between
  states and dynamics?
\newblock {\em arXiv preprint arXiv:1510.04702}, 2015.

\bibitem{lee2015generalised}
Ciar{\'a}n~M Lee and John~H Selby.
\newblock Generalised phase kick-back: a computational advantage for
  higher-order interference?
\newblock {\em arXiv preprint arXiv:1510.04699}, 2015.

\bibitem{lee2015higher}
Ciar{\'a}n~M Lee and John~H Selby.
\newblock Higher-order interference in extensions of quantum theory.
\newblock {\em arXiv preprint arXiv:1510.03860}, 2015.

\bibitem{muller2012structure}
Markus~P M{\"u}ller and Cozmin Ududec.
\newblock Structure of reversible computation determines the self-duality of
  quantum theory.
\newblock {\em Physical Review Letters}, 108(13):130401, 2012.

\bibitem{navascues2015almost}
Miguel Navascu{\'e}s, Yelena Guryanova, Matty~J Hoban, and Antonio Ac{\'\i}n.
\newblock Almost quantum correlations.
\newblock {\em Nature communications}, 6, 2015.

\bibitem{nielsen2010quantum}
Michael~A Nielsen and Isaac~L Chuang.
\newblock {\em Quantum computation and quantum information}.
\newblock Cambridge university press, 2010.

\bibitem{niestegge2012conditional}
Gerd Niestegge.
\newblock Conditional probability, three-slit experiments, and the jordan
  algebra structure of quantum mechanics.
\newblock {\em Advances in Mathematical Physics}, 2012, 2012.

\bibitem{park2012three}
Daniel~K Park, Osama Moussa, and Raymond Laflamme.
\newblock Three path interference using nuclear magnetic resonance: a test of
  the consistency of born's rule.
\newblock {\em New Journal of Physics}, 14(11):113025, 2012.

\bibitem{popescu1998causality}
Sandu Popescu and Daniel Rohrlich.
\newblock {\em Causality and nonlocality as axioms for quantum mechanics}.
\newblock Springer, 1998.

\bibitem{sinha2015superposition}
Aninda Sinha, Aravind~H Vijay, and Urbasi Sinha.
\newblock On the superposition principle in interference experiments.
\newblock {\em Scientific reports}, 5, 2015.

\bibitem{sinha2010ruling}
Urbasi Sinha, Christophe Couteau, Thomas Jennewein, Raymond Laflamme, and
  Gregor Weihs.
\newblock Ruling out multi-order interference in quantum mechanics.
\newblock {\em Science}, 329(5990):418--421, 2010.

\bibitem{sinha2008testing}
Urbasi Sinha, Christophe Couteau, Zachari Medendorp, Immo S{\"o}llner, Raymond
  Laflamme, Rafael Sorkin, and Gregor Weihs.
\newblock Testing born's rule in quantum mechanics with a triple slit
  experiment.
\newblock {\em arXiv preprint arXiv:0811.2068}, 2008.

\bibitem{sorkin1994quantum}
Rafael~D Sorkin.
\newblock Quantum mechanics as quantum measure theory.
\newblock {\em Modern Physics Letters A}, 9(33):3119--3127, 1994.

\bibitem{sorkin1995quantum}
Rafael~D Sorkin.
\newblock Quantum measure theory and its interpretation.
\newblock {\em arXiv preprint gr-qc/9507057}, 1995.

\bibitem{spekkens2007evidence}
Robert~W Spekkens.
\newblock Evidence for the epistemic view of quantum states: A toy theory.
\newblock {\em Physical Review A}, 75(3):032110, 2007.

\bibitem{stahlke2014quantum}
Dan Stahlke.
\newblock Quantum interference as a resource for quantum speedup.
\newblock {\em Physical Review A}, 90(2):022302, 2014.

\bibitem{ududec2011three}
Cozmin Ududec, Howard Barnum, and Joseph Emerson.
\newblock Three slit experiments and the structure of quantum theory.
\newblock {\em Foundations of Physics}, 41(3):396--405, 2011.

\bibitem{zalka1999grover}
Christof Zalka.
\newblock Grover’s quantum searching algorithm is optimal.
\newblock {\em Physical Review A}, 60(4):2746, 1999.

\bibitem{barnum}
Howard Barnum, Jonathan Barrett, Matthew Leifer, and Alexander Wilce.
\newblock A generalized no-broadcasting theorem
\newblock {\em Phys. Rev. Lett.}, 99:240501, 2007.


\end{thebibliography}
\end{document}